\documentclass{article}
\usepackage{graphicx}
\usepackage{amsfonts,amsmath,amssymb,amsthm,bbm}
\usepackage{braket}
\usepackage{hyperref}
\hypersetup{colorlinks=true,linkcolor=blue,citecolor=blue}
\usepackage{cite}
\usepackage{xcolor}
\usepackage[hmargin=1.1in, vmargin=1.3in]{geometry}
\usepackage{mathtools}
\usepackage{footnote}
\makesavenoteenv{enumerate}
\makesavenoteenv{itemize}
\usepackage{authblk}

\usepackage{algorithm}
\usepackage{algorithmic}

\usepackage[normalem]{ulem}
\usepackage{cancel}

\newtheorem{theorem}{Theorem}
\newtheorem{corollary}[theorem]{Corollary}
\newtheorem{lemma}[theorem]{Lemma}
\newtheorem{proposition}[theorem]{Proposition}
\newtheorem{definition}[theorem]{Definition}

\usepackage{titlesec}
\titleformat{\section}[block]
  {\normalfont\bfseries\large\filcenter}
  {\thesection.}{1em}{}
\titleformat{\subsection}[block]
  {\normalfont\bfseries\normalsize\filcenter}
  {\thesubsection.}{1em}{}
\titleformat{\subsubsection}[block]
  {\normalfont\normalfont\filcenter}
  {\thesubsubsection.}{1em}{}
\titlespacing*{\section}
  {0pt}{3em}{2em}
\titlespacing*{\subsection}
  {0pt}{3em}{2em}
\titlespacing*{\subsubsection}
  {0pt}{3em}{2em}
\renewcommand{\thesection}{\Roman{section}}

\usepackage{tocloft}
\cftsetindents{section}{0em}{1.6em}
\cftsetindents{subsection}{1.5em}{2.3em}

\usepackage{tikz}
\usetikzlibrary{decorations.markings}
\usetikzlibrary{arrows.meta}
\tikzset{midarrow/.style={postaction={decorate, decoration={markings, mark=at position 0.6 with {\arrow{latex}}}}}}

\DeclareMathOperator*{\E}{\mathbbm{E}}
\DeclareMathOperator{\Tr}{\mathrm{Tr}}
\DeclareMathOperator{\PP}{\mathbb{P}}
\newcommand{\dd}{\mathrm{d}}
\newcommand{\OO}{\mathcal{O}}

\newcommand{\ii}{\mathrm{i}}
\newcommand{\calA}{\mathcal{A}}
\newcommand{\calF}{\mathcal{F}}
\newcommand{\calW}{\mathcal{W}}

\DeclareMathOperator{\calP}{\mathcal{P}}
\newcommand{\frakd}{\mathfrak{d}}
\newcommand{\onetoone}{{\ensuremath{1\text{-}1}}}
\newcommand{\inftoinf}{{\ensuremath{\infty\text{-}\infty}}}

\newcommand{\htilde}{\widetilde{h}}
\newcommand{\poly}{\mathrm{poly}}
\newcommand{\supp}{\mathrm{supp}}
\DeclareMathOperator{\majsupp}{\operatorname{supp}}
\DeclareMathOperator{\vecrm}{\mathrm{vec}}
\newcommand{\majone}{{\ensuremath{\mathrm{maj},1}}}

\newcommand{\istrue}{\boldsymbol{1}}
\newcommand{\sgn}{\mathrm{sgn}}
\newcommand{\frakI}{\mathfrak{I}}

\newcommand{\Or}{\mathcal{O}}

\title{Efficient Classical Simulation \\ of Weakly Interacting Fermion Dynamics}

\author[1,4]{Chu Zhao}
\author[1,3,4]{Iman Marvian}
\author[1,2,4]{Yu Tong}

\affil[1]{Department of Electrical and Computer Engineering, Duke University, Durham, NC 27708, USA}
\affil[2]{Department of Mathematics, Duke University, Durham, NC 27708, USA}
\affil[3]{Department of Physics, Duke University, Durham, NC 27708, USA}
\affil[4]{Duke Quantum Center, Duke University, Durham, NC 27701, USA}

\date{\today}

\begin{document}

\maketitle

\begin{abstract}
    We consider the task of simulating the real-time dynamics of weakly interacting fermionic systems. In particular, we focus on computing the expectation value of a local observable $A$ at time $t$. By analyzing the convergence of the perturbative expansion in the interaction strength $\lambda$ for the Heisenberg-picture observable, we propose a polynomial-time algorithm for estimating this expectation value in the weakly interacting regime $\lambda |t|^{2D+1}=\mathcal{O}(1)$, when the Hamiltonian is geometrically local on a $D$-dimensional lattice. Importantly, this condition is independent of the system size.
    If the goal is instead to approximate the time-evolved observable in normalized Frobenius norm, we extend the convergence regime to $\lambda |t|=\mathcal{O}(1)$ with quasi-polynomial runtime.
    When the non-interacting part exhibits Anderson localization, our polynomial-time algorithm can be extended up to $\lambda |t|=\mathcal{O}(1)$, modulo polylogarithmic factors.
    Our algorithm brings together ideas from continuous-time QMC, diagrammatic QMC, and Majorana Propagation, but with a new Heisenberg-picture operator-growth analysis that makes the sampling complexity rigorously controllable. This leads to provably efficient classical algorithms in regimes where the interaction is weak enough that the sampling variance remains bounded independently of system size.
    Together, these results identify broad regimes in which weak interactions, locality, and localization can be leveraged to make real-time fermionic dynamics classically tractable.
\end{abstract}

\section{Introduction}\label{sec:intro}

Simulating the real-time dynamics of interacting fermions is a central problem in quantum many-body physics, with direct relevance to condensed matter systems, quantum chemistry, materials science, and nonequilibrium statistical mechanics. Although generic interacting fermionic systems are believed to be intractable for classical computers \cite{Bao2015universal,childs2013universal}, this does not preclude efficient algorithms for structured regimes of physical interest. One particularly natural regime is that of fermionic Hamiltonians with a dominant quadratic component and weak quartic interactions, which interpolate between exactly solvable free-fermion dynamics and fully interacting many-body behavior.

In this work, we consider weakly interacting fermionic systems described by a Hamiltonian of the following form:
\begin{equation}\label{eq:interacting_fermions_intro}
    H := H^0 + \lambda V := \sum_{i,j=1}^{2N} h_{ij} \gamma_i \gamma_j + \lambda \sum_{i,j,k,l=1}^{2N} v_{ijkl} \gamma_{i}\gamma_{j}\gamma_{k}\gamma_{l},
\end{equation}
where $\lambda > 0$ is a perturbation parameter.
The operators $\gamma_i, \gamma_j, \gamma_k$, and $\gamma_l$ are Majorana operators.
$h$ is a purely imaginary antisymmetric matrix, satisfying $h_{ij} = - h_{ji}$, which makes it Hermitian.
$v$ is a real rank-4 tensor satisfying $v_{\pi(i)\pi(j)\pi(k)\pi(l)} = \sgn(\pi) v_{ijkl}$ for any permutation $\pi\in S_4$. These assumptions ensure that $H^0$ and $V$ are Hermitian. We assume that the parameters are bounded: $|h_{ij}| \le 1, |v_{ijkl}| \le 1$.
We will consider two settings: (1) $H$ is a bounded-degree local Hamiltonian, i.e., each Hamiltonian term is supported on at most a constant number of Majorana modes and each Majorana mode is involved in at most a constant number of terms; and (2) $H$ involves short-range interactions on a $D$-dimensional lattice. The latter setting is more restrictive than the former.

Despite the restriction to small $\lambda$, these models have rich phenomenology that has been the focus of many studies. In particular, the emergence of a classical kinetic equation in the $\lambda\to 0$ limit for time $t\lesssim \lambda^{-2}$ has been extensively studied, typically in a translation-invariant setting \cite{vanHove1954quantum,hugenholtz1983derivation,HoLandau1997fermi,benedetto2008n,lukkarinen2009not}. Weakly interacting fermionic systems exhibit interesting prethermal behaviors \cite{stark2013kinetic,lukkarinen2017kinetic}, and have been an important tool for studying transport because they enable controlled approximation while already producing behavior qualitatively distinct from free fermions \cite{KielyMueller2021}. Because ultracold fermions in optical lattices provide experimental realizations of Hubbard-type models, the weakly interacting models studied in this work are useful as benchmark tools to validate experimental results \cite{Esslinger2010fermi,Jordens2008mott,Schneider2008metallic,Cheuk2016observation,mazurenko2017cold,bakr2025microscopy}.

Due to the importance of this problem, many numerical algorithms have been developed for this task. Recently, a pioneering work by Facelli et al. \cite{facelli2026fastconvergencemajoranapropagation} applied the Majorana Propagation algorithm, originally designed for simulating fermionic circuits \cite{miller2025simulation,DAnna2025majorana}, to this problem, to obtain an algorithm with provable convergence and quasi-polynomial runtime, going beyond numerical heuristics. More specifically, for a weakly interacting fermionic Hamiltonian as described in \eqref{eq:interacting_fermions_intro} that is bounded-degree local, \cite{facelli2026fastconvergencemajoranapropagation} proposes an algorithm that can approximate a local observable to
precision $\epsilon$ in normalized Frobenius norm with runtime $N^{\Or(\log(t/\epsilon))}$ as long as $t= \Or(\log(1/\lambda))$ (see \cite[Theorem 1.2]{facelli2026fastconvergencemajoranapropagation}).

In this work, we improve this result in two ways by more refined analysis and improved algorithm design. In particular, we will
(1) improve the convergence analysis to cover the parameter regime $t=\Or(1/\lambda)$ for bounded-degree local Hamiltonians, thereby \emph{exponentially extending the time scale} that can be simulated using the method in \cite{facelli2026fastconvergencemajoranapropagation}; 
(2) propose an algorithm that runs in time $\Tilde{\Or}(N^3/\epsilon^2)$, where $\tilde{O}(\cdot)$ hides logarithmic factors in $1/\epsilon$, as long as $t=\Or((1/\lambda)^{\frac{1}{2D+1}})$ on a $D$-dimensional lattice, thereby \emph{superpolynomially improving the runtime} compared to the state of the art.\footnote{We note that for (2) the more precise condition is $\lambda|t|(1+|t|)^{2D}=\Or(1)$ (Theorem~\ref{thm:convergence_maj_one}), but we can simplify it to $t=\Or((1/\lambda)^{\frac{1}{2D+1}})$ when considering sufficiently large $t>0$.} We will discuss these results in more detail below.

\medskip
\noindent \textbf{Main results.}
In this paper, we present a simulation algorithm that computes the quantity
\[
\Tr(\rho_0e^{iHt } A e^{-iHt })
\]
where $\rho_0$ is any state for which the expectation value of Majorana monomials can be efficiently computed, such as a
fermionic Gaussian state, i.e., the thermal state of a fermionic quadratic Hamiltonian, including pure states obtained as limiting cases. 
$A$ is a local observable involving a constant number of Majorana modes.
$H$ is the Hamiltonian for a weakly interacting fermionic system, as defined in \eqref{eq:interacting_fermions_intro}.
In our simulation algorithm, we exploit the fact that the free-fermion dynamics is classically easy to compute.
We first define $\widetilde{A}(t) = e^{-iH^0 t} e^{iHt} A e^{-iHt} e^{iH^0 t}$ to isolate the dynamics due to the interacting part $\lambda V$.
Our algorithm is based on a perturbative expansion
of $\widetilde{A}(t)$ in $\lambda$.
If this expansion converges exponentially, we can truncate the series at a suitable order to obtain an approximation with the desired accuracy.

We obtain the perturbative expansion of $\widetilde{A}(t)$ in $\lambda$ by computing the Dyson series for $\widetilde{A}(t)$. Note that
\begin{equation*}
    \frac{\dd }{\dd t } \widetilde{A}(t) = i \lambda [ \tau^0_{-t} (V), \widetilde{A}(t)],
\end{equation*}
where we denote the free evolution in the Heisenberg picture under $H^0$ by
\begin{equation}
    \tau^0_t = e^{iH^0 t} ( \, \cdot \,) e^{-iH^0 t}.
\end{equation}
We therefore obtain the perturbative expansion
\begin{subequations} \label{eq:series_expansion}
\begin{equation}
    \label{eq:series_expansion_sum}
    \widetilde{A}(t) = \sum_{k=0}^\infty \widetilde{A}^{(k)}(t),
\end{equation}
where
\begin{equation} \label{eq:kth_order}
    \widetilde{A}^{(k)}(t) = (i\lambda)^k \int_0^{t} \dd t_k \int_0^{t_{k}} \dd t_{k-1} \cdots \int_0^{t_{2}} \dd t_1 \, \Big[ \tau^0_{-t_k} (V), \big[\tau^0_{-t_{k-1}}(V), \cdots, [\tau^0_{-t_1}(V), A]\big] \Big].
\end{equation}
\end{subequations}

Our first result shows that the series \eqref{eq:series_expansion} converges exponentially in the normalized Frobenius norm  when $t=\Or(1/\lambda)$, provided that the Hamiltonian is bounded-degree local.\footnote{We will use ``Frobenius norm'' and ``normalized Frobenius norm'' interchangeably throughout the paper.} We assume that each Majorana mode appears in at most $\frakd$ Hamiltonian terms and each Hamiltonian term involves at most $\kappa$ Majorana modes. We call such a Hamiltonian $(\frakd, \kappa)$-bounded-degree local (see Definition \ref{def:bounded_deg_local}). $\frakd$ and $\kappa$ can generally be regarded as constants that are independent of the system size.
For our Hamiltonian \eqref{eq:interacting_fermions_intro}, $\kappa=4$.
This locality condition controls operator spreading in the Heisenberg picture and thereby ensures that the growth of the Frobenius norm is independent of the system size.

\begin{theorem}
\label{thm:Fnorm_convergence_intro}
    Consider an
    interacting fermionic system
    \eqref{eq:interacting_fermions_intro}
    with $(\frakd, 4)$-bounded-degree locality (Definition~\ref{def:bounded_deg_local}).
    For any initial observable $A$ and $k\ge 1$,
    \begin{equation*}
        \| \widetilde{A}^{(k)}(t) \|_F \le (2 e^2\lambda |t| \frakd)^k e^{d_0-2} \|A\|_F,
    \end{equation*} where $d_0 = \deg(A)$.
    The series \eqref{eq:series_expansion} converges exponentially when $2 e^2\lambda |t| \frakd < 1$.
\end{theorem}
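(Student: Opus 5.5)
We expand $\widetilde{A}^{(k)}(t)$ in the Majorana monomial basis and control the Frobenius norm with a combinatorial operator-growth argument. Majorana monomials $\gamma_S$ are orthonormal in the normalized Frobenius inner product. Free evolution $\tau^0_{s}$ is a Gaussian unitary: it acts on each degree-$d$ sector by the orthogonal matrix $\wedge^d R(s)$, where $R(s)=e^{4ihs}\in SO(2N)$. Hence $\tau^0_s$ preserves the degree and is an isometry in $\|\cdot\|_F$. For this reason we do not pull the free evolution onto $V$. Instead we move to the interaction picture and write the nested commutator in \eqref{eq:kth_order} as
\[
\tau^0_{-t_k}\Big(\mathcal{L}_{s_k}\cdots\mathcal{L}_{s_1}(\tau^0_{t_1}\cdots)\Big),
\]
that is, as an alternation of isometries $\tau^0$ and superoperators $\mathrm{ad}_V=[V,\cdot]$. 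The point of this form is that all locality information is carried by $\mathrm{ad}_V$ acting on fixed-degree sectors.

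\textbf{Step 1 (single-commutator bound on a degree sector).} The key lemma is that for any operator $O$ supported in degree-$d$ monomials,
\[
\|[V,O]\|_F\le 2\frakd\, d\,\|O\|_F ,
\]
and that $[V,O]$ is supported in degrees $d$ and $d+2$. The degree claim is immediate: commuting a monomial of degree $d$ with a quartic monomial is nonzero only for odd overlap (size $1$ or $3$), giving degree $d+2$ or $d-2$. The degree-$(d-2)$ case is a subtlety I will have to handle, possibly by bounding the degree by $d_0+2k$ regardless. For the norm, I will not use the monomial support, since free evolution destroys it. Instead I write
\[
V=\sum_j \gamma_j\, W_j ,\qquad W_j \text{ cubic},
\]
and use the derivation structure. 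A commutator with a quartic term acts like a second-quantized operator, so on the degree-$d$ sector $\mathrm{ad}_V$ is a sum over pairs. I would bound its operator norm using a Schur-test or row-sum argument, exploiting that each mode lies in at most $\frakd$ terms with coefficients of size at most $1$. The main obstacle is that the bound must be basis-independent, because after $\tau^0$ the operator is no longer local. I would try to prove directly that
\[
\|\mathrm{ad}_V|_{\deg\le d}\|_{F\to F}\le 2\frakd d,
\]
either by viewing $\mathrm{ad}_V$ restricted to $\wedge^d$ as a "second quantization" $d\Gamma$-type operator of a bounded-norm kernel, or, failing that, by accepting a factor involving $\kappa$.

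\textbf{Step 2 (iterate).} After $m$ commutators the degree is at most $d_0+2m$. Combining this with Step 1 and the isometry property gives the integrand bound
\[
\prod_{m=0}^{k-1}2\frakd(d_0+2m)\,\|A\|_F .
\]
Integrating over the time simplex contributes $|t|^k/k!$.

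\textbf{Step 3 (combinatorics).} It remains to estimate
\[
\frac{\lambda^k|t|^k}{k!}\prod_{m<k}2\frakd(d_0+2m)=(4\frakd\lambda|t|)^k\frac{\Gamma(d_0/2+k)}{\Gamma(d_0/2)\,k!}.
\]
Using Stirling-type bounds,
\[
\frac{(d_0/2+k)!}{k!\,(d_0/2)!}\le e^{d_0/2+\cdots},
\]
which after adjusting constants gives $(2e^2\lambda|t|\frakd)^k e^{d_0-2}$. An alternative is to split $\prod(d_0+2m)\le \prod\max(\cdot)$ and use $k^k/k!\le e^k$; this is where the $e^2$ and $e^{d_0-2}$ presumably arise. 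Exponential convergence then follows from a geometric series once $2e^2\lambda|t|\frakd<1$. The hardest step is Step 1, namely obtaining a degree-linear, system-size-independent Frobenius bound for $\mathrm{ad}_V$ that survives arbitrary Gaussian rotations.
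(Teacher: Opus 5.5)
Your architecture matches the paper's. The free evolution $\tau^0_{\pm s}$ acts isometrically on the degree-$\le p$ subspace, $\mathrm{ad}_V=[V,\cdot]$ is bounded by $2p\frakd$ there, the degree grows as $d_0+2j$, and the simplex contributes $|t|^k/k!$. The paper then finishes with essentially your ``alternative'' Step 3: $\prod_{j<k}(d_0+2j)\le (d_0+2k-2)^k\le k!\,e^{d_0+2k-2}$, using $x^k\le k!e^x$.

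\textbf{The gap.} Step 1 is essentially the whole technical content of the theorem, and you leave it unproved. You also hedge toward a $\kappa$-dependent constant, which would not give the theorem as stated. The obstacle you name is not real. The monomials $\gamma_S$ are orthonormal for the normalized Frobenius inner product. So the claim $\|[V,O]\|_F\le 2p\frakd\|O\|_F$ for all $O$ of degree $\le p$ is exactly the $\ell_2$ operator-norm bound $\|Q\mathcal{P}_p\|\le 2p\frakd$ for the matrix $Q_{S'S}=\langle\gamma_{S'},[V,\gamma_S]\rangle$. Such a bound holds for every vector in $\mathrm{span}\{\gamma_S:|S|\le p\}$, however delocalized, and $\tau^0_s$ maps this span onto itself isometrically. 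Locality enters only through the sparsity pattern of $Q$, never through the support of the input. So nothing has to ``survive'' Gaussian rotations, and no second-quantization picture is needed. A $d\Gamma$-type operator would in any case preserve degree, whereas $\mathrm{ad}_V$ shifts it by $\pm 2$.

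\textbf{How to close it.} Apply the Schur test $\|X\|\le\sqrt{\|X\|_{\onetoone}\,\|X\|_{\inftoinf}}$ to $X=Q\mathcal{P}_p$ in the monomial basis, as the paper does.
\begin{itemize}
\item \emph{Column sums.} For $|S|\le p$, only terms $T$ with $T\cap S\neq\emptyset$ contribute to $[V,\gamma_S]$. Each produces a single monomial $\gamma_{S\Delta T}$ with coefficient of modulus at most $2|v_T|\le 2$. There are at most $|S|\frakd$ such terms, so the column sum is at most $2p\frakd$.
\item \emph{Row sums, $|S'|\le p$.} Antisymmetry $Q_{S'S}=-Q_{SS'}$ turns the row sum into a column sum, again at most $2p\frakd$.
\item \emph{Row sums, $p<|S'|\le p+2$.} Only degree-raising transitions from $|S|=|S'|-2$ contribute. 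This forces $|S'\cap T|=3$, and then $S=S'\Delta T$ is determined. Double counting incidences between modes of $S'$ and such terms bounds their number by $|S'|\frakd/3$. The row sum is therefore at most $\tfrac{2}{3}(p+2)\frakd\le 2p\frakd$ for $p\ge1$.
\end{itemize}
The factor $1/3$ is the one nontrivial counting step; without it the row sums would only be bounded by $2(p+2)\frakd$.

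\textbf{Degree bookkeeping.} $[V,\gamma_S]$ has degree $|S|\pm2$ (odd overlap $1$ or $3$), not $|S|$ and $|S|+2$. The iterates contain mixed degrees, and different input sectors feed the same output sector. So the bound must be stated on the whole degree-$\le p$ subspace via $\mathcal{P}_p$, as in your displayed version. Working sector by sector and recombining would cost extra constants.
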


Based on this result, the integral \eqref{eq:kth_order}, and hence $\widetilde{A}(t)$ and $A(t):= e^{iHt} A e^{-iHt}$, can be evaluated by truncating the Dyson series at order $K = \Or(\log (1/\epsilon))$.
Viewing the integrand as a polynomial in Majorana operators of degree $\Or(k)$, each order can be estimated by Monte Carlo integration in the degree-$\Or(K)$ Majorana subspace. This gives the quasi-polynomial runtime stated in Corollary \ref{cor:quasi-polynomial-time} below.

We remark that Theorem \ref{thm:Fnorm_convergence_intro} and Corollary \ref{cor:quasi-polynomial-time} can be generalized to Hamiltonians with an arbitrary quadratic part, whose coefficients need not be bounded by \(1\) and which need not be bounded-degree local. As we shall see in the proof in Section \ref{sec:convergence_in_F_norm}, the free evolution does not contribute to the growth of the Frobenius norm due to unitarity.

\begin{corollary} \label{cor:quasi-polynomial-time}
    Consider an
    interacting fermionic system
    \eqref{eq:interacting_fermions_intro}
    with $(\frakd, 4)$-bounded-degree locality (Definition~\ref{def:bounded_deg_local}).
    Let $A$ be an initial observable given as a list of Majorana monomials.
    When $|t| < 0.99/(2e^2 \lambda \frakd)$, there is an
    \[
        {\Or}\left( e^{2d_0}\|A\|_F^2 K^2 N^{\Or(K)} \log \frac{K}{\delta}\right)
    \]
    -time algorithm, where $d_0 = \deg(A)$ and
    $K = \Or(d_0 + \log \|A\|_F + \log \frac{1}{\epsilon})$,
    that computes $\widetilde{A}(t)$ and $A(t):= e^{iHt} A e^{-iHt}$ to within $\Or(\epsilon)$ additive error in the normalized Frobenius norm with success probability $1-\delta$.

\end{corollary}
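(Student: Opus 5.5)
The plan is to truncate the Dyson series at order $K$, estimate each retained term by Monte Carlo integration, and bound the truncation and sampling errors separately.

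\textbf{Truncation.} Theorem~\ref{thm:Fnorm_convergence_intro} with $r := 2e^2\lambda|t|\frakd \le 0.99$ bounds the tail by
\[
\sum_{k>K}\|\widetilde{A}^{(k)}(t)\|_F \le \frac{r^{K+1}}{1-r}\,e^{d_0-2}\|A\|_F .
\]
Choosing $K = \Or(d_0+\log\|A\|_F+\log(1/\epsilon))$ makes this at most $\epsilon$. Since conjugation by $e^{iH^0t}$ is unitary and preserves the normalized Frobenius norm, the same accuracy transfers from $\widetilde{A}(t)$ to $A(t)=\tau^0_t(\widetilde{A}(t))$. Moreover, $\tau^0_t$ maps a Majorana polynomial of degree $d$ to one of degree $d$, computed via the orthogonal matrix $e^{4ht}$ in $\poly(N)$ time per mode. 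So it suffices to approximate each $\widetilde{A}^{(k)}(t)$, $k\le K$, to accuracy $\epsilon/K$.

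\textbf{Sampling each order.} I would rewrite \eqref{eq:kth_order} as $\widetilde{A}^{(k)}(t)=\frac{(i\lambda t)^k}{k!}\,\E_{\vec t}[C(\vec t)]$, where $\vec t$ is uniform on the ordered simplex and $C$ is the nested commutator. Each $\tau^0_{-t_j}(V)$ is an explicit degree-4 polynomial with at most $\Or(N^4)$ monomials. Hence $C(\vec t)$ is a Majorana polynomial of degree at most $d_0+2k$ with at most $N^{\Or(k)}$ monomials, and it can be computed exactly in $N^{\Or(k)}$ time. The estimator is the empirical mean of $M$ i.i.d.\ samples, viewed as a vector in the Hilbert space of operators with the normalized Frobenius inner product.

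\textbf{Variance and concentration.} The variance is bounded by $\sup_{\vec t}\|\frac{(\lambda t)^k}{k!}C(\vec t)\|_F^2$. This is the step I expect to be the main obstacle, since Theorem~\ref{thm:Fnorm_convergence_intro} controls only the integrated norm. I would instead reuse its proof pointwise. That proof bounds the integrand via degree-counting: the free evolution is unitary and each commutator with $V$ multiplies the Frobenius norm by $\Or(\frakd\,\cdot\,\text{degree})$. The integrand norm is therefore at most $(c\frakd)^k (d_0+2k)!!$-type factors times $\|A\|_F$, and dividing by $k!$ gives the same geometric bound $r^k e^{d_0}\|A\|_F$. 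If this pointwise bound is not directly available, I would fall back on the cruder bound $N^{\Or(k)}$, which is absorbed into the runtime anyway.

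A vector Bernstein or Hoeffding inequality in Hilbert space, or a median-of-means over $\Or(\log(K/\delta))$ batches, then gives error $\epsilon/K$ with failure probability $\delta/K$. This requires $M=\Or(e^{2d_0}\|A\|_F^2K^2\epsilon^{-2})$ samples per order; the $\epsilon^{-2}$ factor is absorbed in $N^{\Or(K)}$ since $K\gtrsim\log(1/\epsilon)$. A union bound over the $K$ orders and multiplication by the per-sample cost $N^{\Or(K)}$ yields the stated runtime.
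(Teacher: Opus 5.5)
Your proposal is correct and follows essentially the paper's route. It truncates via Theorem~\ref{thm:Fnorm_convergence_intro}, samples uniformly on the ordered simplex, computes the integrand exactly in the degree-$\Or(K)$ subspace, and uses the pointwise bound $\|S_j\|_F\le(2e^2\lambda|t|\frakd)^k e^{d_0-2}\|A\|_F$. That bound can be read off from the first display in the proof of that theorem, so the step you flag as the main obstacle is already available and your fallback is unnecessary. For concentration in Hilbert space, the paper uses $\E Z\le\sqrt{\E Z^2}$ together with McDiarmid's inequality where you use a vector Hoeffding bound.

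Your explicit transfer from $\widetilde{A}(t)$ to $A(t)=\tau^0_t(\widetilde{A}(t))$ fills in a step the paper leaves implicit. Two small corrections: the single-mode evolution matrix is $e^{i\htilde t}=e^{4iht}$ rather than $e^{4ht}$, and the per-order sample count should carry the $\log(K/\delta)$ factor.
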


This result exponentially extends the $|t| = \Or(\log (1/\lambda))$ regime of \cite{facelli2026fastconvergencemajoranapropagation}.
Here we regard $d_0, \|A\|_F$ as constants.
The above algorithm runs in time $\Tilde{\Or}( N^{\Or(\log ( 1 / \epsilon ) ) } )$, where $\Tilde{\Or}(\cdot)$ hides logarithmic factors in $\delta$.
Unlike the $N^{\Or(\log(|t|/\epsilon))}$ runtime of \cite{facelli2026fastconvergencemajoranapropagation}, our runtime does not depend on $|t|$, as long as $|t|$ is below the threshold,
which is preferable when $\lambda$ is small and $|t|$ is large.
Compared to the deterministic algorithm in \cite{facelli2026fastconvergencemajoranapropagation}, our algorithm is randomized.
Our approach to approximating the time-evolved observable
also differs from that used in \cite{facelli2026fastconvergencemajoranapropagation}. Both approaches obtain a low-degree approximation, but we approximate the time evolution using a truncated Dyson series, whereas \cite{facelli2026fastconvergencemajoranapropagation} does so using Trotterization.
A key difference between our analysis and that of \cite{facelli2026fastconvergencemajoranapropagation} is that we use the interaction picture to keep $H^0$ in the exponential.
This allows us to track the operator spreading caused by $e^{-iH^0 t}$ separately from that caused by $\lambda V$ and to benefit from the unitarity of the free evolution. The proof of Theorem \ref{thm:Fnorm_convergence_intro} and Corollary \ref{cor:quasi-polynomial-time} is presented in Section \ref{sec:convergence_in_F_norm}.

The convergence and precision guarantees stated above and in \cite{facelli2026fastconvergencemajoranapropagation} are all measured in the normalized Frobenius norm, which provides a notion of the average error of the expectation value approximation over random input states. The above result therefore does not guarantee the worst-case error for all input states, which requires an operator-norm error bound.
In the worst case, the operator norm and the normalized Frobenius norm differ by a dimension-dependent factor.
As we will see later, when the Hamiltonian is geometrically local, the Dyson series truncation performs significantly better than predicted by Theorem \ref{thm:Fnorm_convergence_intro}. In particular, we quantify the error using the \textit{Majorana $1$-norm}, which we define as the $\ell_1$-norm of the vector coefficients obtained by expanding the operator in the Majorana monomial basis (see Definition~\ref{def:norm_notation}). By definition, the Majorana $1$-norm upper bounds operator norm. On the other hand, Majorana $1$-norm is useful for designing a randomized sampling algorithm, introduced shortly, that reduces the quasi-polynomial runtime to polynomial time. We first establish the following exponential convergence of the series \eqref{eq:series_expansion} in Majorana $1$-norm when $t=\Or((1/\lambda)^{\frac{1}{2D+1}})$ for locally interacting fermions on a $D$-dimensional lattice. We assume that all Hamiltonian terms have finite range $r_0$ and each Majorana mode appears in at most $\frakd$ Hamiltonian terms. We call such a Hamiltonian $(r_0, \frakd)$-geometrically local (see Definition \ref{def:geo_local}).

\begin{theorem}\label{thm:convergence_maj_one}
    Consider an
    interacting fermionic system
    \eqref{eq:interacting_fermions_intro}
    on a $D$-dimensional lattice with a $(r_0, \frakd)$-geometrically local Hamiltonian (Definition~\ref{def:geo_local}). Let
    \begin{equation*}
        W_t = 2 \frakd \left( 1 + C_D  \lceil 4e \frakd |t| r_0 \rceil^{D/2}  \right)^4,
    \end{equation*} where $C_D > 0$ is a constant that depends only on $D$. Then, for any initial observable $A$ and $k\ge 1$, we have
    \begin{equation*}
        \| \widetilde{A}^{(k)}(t) \|_{\majone} \le (e^2 \lambda |t| W_t )^k e^{d_0-2} \|A\|_\majone,
    \end{equation*}
    where $d_0 = \deg(A)$.
    The series \eqref{eq:series_expansion} converges exponentially when $e^2 \lambda |t|W_t < 1$.
\end{theorem}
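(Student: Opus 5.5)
Our approach is to bound the nested commutator in \eqref{eq:kth_order} term by term in the Majorana monomial basis, by controlling the Majorana $1$-norm of each freely evolved interaction term $\tau^0_{-s}(V)$ restricted to the monomials that fail to commute with the current operator. The first step is to get a Lieb--Robinson-type estimate for the free dynamics. Since $H^0$ is quadratic, $\tau^0_{-s}(\gamma_i)=\sum_j [e^{-4 s h}]_{ij}\gamma_j$ (up to the paper's normalization of $h$). Because $h$ is $(r_0,\frakd)$-geometrically local with entries bounded by $1$, we expand the matrix exponential as a power series. Entries at distance greater than $n r_0$ then come only from terms of order at least $n$, and they are suppressed like $(4\frakd|s|)^n/n!$. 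Choosing a radius $R\sim \lceil 4e\frakd|t|r_0\rceil$, the tail beyond distance $R$ is exponentially small. The row's $\ell_1$ norm inside the ball is controlled by Cauchy--Schwarz: unitarity (orthogonality of $e^{-4sh}$) gives $\ell_2$ norm $1$, and the ball contains $\Or(R^D)$ sites, so the $\ell_1$ norm is at most $C_D R^{D/2}$. The tail adds at most the constant $1$. This should yield $\sum_j |[e^{-4sh}]_{ij}| \le 1 + C_D\lceil 4e\frakd|t|r_0\rceil^{D/2}$, uniformly for $|s|\le|t|$.

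Next, a quartic term $\gamma_i\gamma_j\gamma_k\gamma_l$ of $V$ evolves to a product of four such linear combinations. Its Majorana $1$-norm is therefore at most $(1+C_D\lceil\cdot\rceil^{D/2})^4$, which is the fourth power appearing in $W_t$.

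The second step is the combinatorial growth argument, which mirrors the one behind Theorem~\ref{thm:Fnorm_convergence_intro}. We write the current operator $O_{m}$ after $m$ commutators as a combination of monomials. For a monomial $\Gamma$ of degree $d$, only those terms of $V$ containing a Majorana that anticommutes with $\Gamma$ contribute to $[\tau^0_{-s}(V),\Gamma]$. The difficulty is that after free evolution every term may overlap $\Gamma$. We therefore reorganize: expand $\tau^0_{-s}(V)=\sum_{q}v_q\prod_{a=1}^4\tau^0_{-s}(\gamma_{q_a})$ into monomials $\gamma_{j_1}\cdots\gamma_{j_4}$, and charge each contributing monomial to one of the $\le 4d$ indices $j_a\in\supp\Gamma$. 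For a fixed $j\in\supp\Gamma$, summing over the original terms $q$ that could produce $j$, and over the other three indices, costs at most $\frakd$ times the column-$\ell_1$ factor times three row-$\ell_1$ factors. The column and row sums agree by antisymmetry of $h$. Together with the factor $2$ from the commutator, this gives $\|[\tau^0_{-s}(V),\Gamma]\|_\majone\le d\,W_t$, up to the constant absorbed in $W_t$. Each commutator raises the degree by at most $2$, so at step $m$ the degree is at most $d_0+2m$.

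The third step iterates and integrates. The bound becomes $\lambda^k\frac{|t|^k}{k!}W_t^k\prod_{m=0}^{k-1}(d_0+2m)\,\|A\|_\majone$. Using $\prod_{m<k}(d_0+2m)\le 2^k\,\Gamma(k+d_0/2)/\Gamma(d_0/2)$ and a Stirling-type estimate $\frac{(k+d_0/2-1)!}{k!}\lesssim e^{k+d_0-2}$, exactly as in the proof of Theorem~\ref{thm:Fnorm_convergence_intro}, we obtain $(e^2\lambda|t|W_t)^k e^{d_0-2}\|A\|_\majone$, after absorbing the constant $2$ into $W_t$.

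The main obstacle is the first step. We need the Majorana $1$-norm bound on $\tau^0_{-s}(\gamma_i)$ to be uniform in $s\le t$, and the $\ell_2\to\ell_1$ conversion via the light cone must be arranged so that the exponentially small tail contributes only the additive constant $1$. We must also check that the charging argument does not double count. We expect the charging argument to need the column sums of $|e^{-4sh}|$, which coincide with the row sums because $e^{-4sh}$ is real orthogonal.
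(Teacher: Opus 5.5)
Your proof is correct. It arrives at exactly the paper's per-commutator estimate $\|[\tau^0_{-s}(V),\Gamma]\|_{\majone}\le\deg(\Gamma)\,W_t$ for a monomial $\Gamma$ (Corollary~\ref{cor:TQTPp_1to1norm}). Your first and third steps are Lemma~\ref{lem:eiht_bnd} (with Lemma~\ref{lem:1norm_submulplicativity}) and the paper's final integration, so the difference is confined to the middle step.

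The paper never expands $\tau^0_{-s}(V)$. It uses the exact Leibniz identity \eqref{eq:commutator_expansion} together with $[\tau^0_{-s}(V),\gamma_j]=\tau^0_{-s}\bigl([V,\tau^0_{s}(\gamma_j)]\bigr)$ to reduce to degree one (Lemma~\ref{lem:reduction2deg_one}). It then factors $\|T_{-s}\calP_3\|_\onetoone\,\|Q\calP_1\|_\onetoone\,\|T_{s}\calP_1\|_\onetoone\le L_t^3\cdot 2\frakd\cdot L_t$, where $L_t=1+C_D\lceil 4e\frakd|t|r_0\rceil^{D/2}$. You instead expand $\tau^0_{-s}(V)$ into products of four evolved Majoranas and union-bound over which leg lands in $\supp\Gamma$. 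The two computations agree term by term:
\begin{itemize}
\item For fixed $j\in\supp\Gamma$, your sum over input modes is $\sum_m|(e^{-i\htilde s})_{jm}|=\sum_m|(e^{i\htilde s})_{mj}|=\|\tau^0_{s}(\gamma_j)\|_{\majone}$, since the matrix is real orthogonal.
\item The sum over terms through $m$, with the commutator's factor $2$, is the bound $\|[V,\gamma_m]\|_{\majone}\le 2\frakd$ of Lemma~\ref{lem:commutator_matrix_norm}.
\item The three free legs are $\tau^0_{-s}$ applied to a degree-$3$ monomial.
\end{itemize}
Your version is a self-contained count that needs neither the conjugation identity nor the induced-norm bookkeeping. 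The paper's version is an exact decomposition rather than a union bound, which is why the same factorization serves directly as the unbiased sampler \textsc{SampleNext} with weight bound \eqref{eq:weight_upp_bnd}. Overcounting in your charging is harmless. With $\frakd$ counting ordered tuples through a mode, as in the proof of Lemma~\ref{lem:commutator_matrix_norm}, summing over the four slots costs no extra factor $4$.

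A few details to tighten.
\begin{itemize}
\item At fixed $s$ the row and column sums of the matrix of $\tau^0_{-s}$ need not coincide. Its row sums are the column sums of the matrix of $\tau^0_{s}$, and Lemma~\ref{lem:eiht_bnd} holds for all $t\in\mathbb{R}$, which is all you use.
\item With your per-order factor $4\frakd$, the tail is bounded by $1$ only once the radius is about $8e\frakd|t|r_0$. The sharper $\|\htilde\|_\onetoone\le 2\frakd$ (from the $\frakd/2$-sparsity of $\htilde$) gives the stated $4e\frakd|t|r_0$.
\item Nothing needs absorbing at the end: $\prod_{m=0}^{k-1}(d_0+2m)\le(d_0+2k-2)^k\le k!\,e^{d_0+2k-2}$ gives the claimed bound exactly. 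Your estimate $\frac{(k+d_0/2-1)!}{k!}\lesssim e^{k+d_0-2}$ is false for $d_0\gg k$ once the $\Gamma(d_0/2)$ denominator is dropped.
\item $W_t$ is fixed by the statement, so any leftover constant could only be pushed into $C_D$, not into $W_t$. That is possible because $\lceil 4e\frakd|t|r_0\rceil\ge 1$ for $t\neq 0$, and $t=0$ is trivial.
\end{itemize}
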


Again, here $r_0$, $D$, $\|A\|_{\mathrm{maj},1}$, and $d_0$
can be regarded as constants that are independent of the system size.
This convergence theorem
also follows from an analysis of operator spreading in the Heisenberg picture.
In particular, for geometrically local Hamiltonians on a lattice, time evolution spreads operators at a finite velocity, leading to a controlled growth of the Majorana $1$-norm. We prove a quantitative statement describing this in Lemma~\ref{lem:eiht_bnd} for degree-$1$ Majorana monomials under free evolution.
The proof of the theorem is given in Section \ref{sec:convergence_in_one_norm}.

Based on this result, we propose an efficient randomized algorithm for computing \(\Tr(\rho_0 e^{iHt}Ae^{-iHt})\). The algorithm samples the perturbative expansion \eqref{eq:series_expansion} directly in the Majorana basis.
For each order up to the truncation order, each commutator produces a linear combination of Majorana monomials; rather than retaining the full combination, we randomly sample one monomial with probability proportional to the magnitude of its coefficient and keep track of the corresponding sampling weight.
This produces unbiased samples for $\widetilde{A}^{(k)}(t)$.
The Majorana $1$-norm controls the sampling weights and hence the sampling variance, which remains bounded in the same parameter regime as in Theorem \ref{thm:convergence_maj_one}. The algorithm is presented in detail in Section \ref{sec:algorithm}.

\begin{theorem}[Theorem~\ref{thm:sample_complexity_l1}, informal]
    Consider
    locally interacting fermions on a $D$-dimensional lattice.
    Let $N$ be the number of fermionic modes,
    $A=\sum_{i=1}^M a_i O_i$ be an observable given as a list of Majorana monomials $\{O_i\}_i$ with coefficients $\{a_i\}_i$ satisfying $ |a_i|\le 1$, and $\rho_0$ be a Gaussian state.
    Then there exists a constant
    $c = \Or(1)$ such that
    when $\lambda |t| (1+|t|)^{2D} < c$,
    there is a sampling algorithm that computes $\Tr(\rho_0\, e^{iHt } A e^{-iHt })$ to within additive error $\epsilon$ with probability at least $1-\delta$ in time
    $\widetilde{\OO}\left(\epsilon^{-2}M^3 e^{2d_0} N^3 \right)$, where $d_0 = \deg(A)$ and $\widetilde{\OO}(\cdot)$ hides logarithmic factors in $M, 1/\epsilon$ and $1/\delta$.
\end{theorem}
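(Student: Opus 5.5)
We prove this by building an unbiased Monte Carlo estimator for the truncated Dyson series and bounding its variance with the Majorana $1$-norm analysis behind Theorem~\ref{thm:convergence_maj_one}. By linearity in $A$, we handle each monomial $O_i$ separately and then sum; this costs a factor $M$ in accuracy budget and a factor $M$ in work. Writing $\Tr(\rho_0 A(t)) = \Tr(\tau^0_{-t}(\rho_0)\,\widetilde{A}(t))$ up to the free conjugation, we note that $\tau^0_t$ maps Gaussian states to Gaussian states. The free evolution of each Majorana operator is a rotation $\gamma_j \mapsto \sum_l R(t)_{jl}\gamma_l$ obtained from an $N\times N$ matrix exponential, which costs $\Or(N^3)$ time. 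Expectations of monomials in a Gaussian state are Pfaffians of submatrices of the covariance matrix, which is efficient.

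Choose $c$ so that $q := e^2\lambda|t|W_t \le 1/2$. Since $W_t = \Or((1+|t|)^{2D})$, this follows from $\lambda|t|(1+|t|)^{2D}<c$. Theorem~\ref{thm:convergence_maj_one} then bounds the tail beyond order $K$ by $\Or(2^{-K}e^{d_0}\|O_i\|_\majone)$. We take $K=\Or(d_0+\log(M/\epsilon))$, so the total truncation error is at most $\epsilon/2$.

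For each order $k\le K$, we sample as follows. Pick times $t_1<\dots<t_k$ uniformly from the simplex, which has volume $t^k/k!$. Iteratively from the inside out, expand $\tau^0_{-t_j}(V)=\sum_{abcd} v_{abcd}\,\tau^0_{-t_j}(\gamma_a\gamma_b\gamma_c\gamma_d)$ into Majorana monomials via $R(-t_j)$, commute with the current monomial, and sample one resulting term with probability proportional to the absolute value of its coefficient. We multiply the running weight by the corresponding $1$-norm normalizer. The resulting random monomial times its weight is an unbiased estimator of $\widetilde{A}^{(k)}(t)$. Its weight is bounded by exactly the quantity controlled in the proof of Theorem~\ref{thm:convergence_maj_one}, namely $(e^2\lambda|t|W_t)^k e^{d_0-2}$ after including the $t^k/k!$ factor.

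The structural fact we rely on from that proof is that the bound is obtained pathwise. At each step the commutator of a degree-$d$ monomial with the free-evolved $V$ has $1$-norm at most proportional to $d\,W_t$, and the degree grows by at most $2$. We restate this per-sample, so that each sample's weight is almost surely bounded by $2^{-k}e^{d_0}$ rather than only in expectation. Evaluating the sampled monomial's expectation on $\rho_0$ gives a value bounded by $1$ in modulus. The per-sample estimator summed over $k$ is therefore bounded by $\Or(e^{d_0})$, and Hoeffding's inequality gives $\Or(\epsilon^{-2}M^2e^{2d_0}\log(M/\delta))$ samples for the whole observable.

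\textbf{Main obstacle: per-sample cost.} Naive expansion of $\tau^0_{-t_j}(V)$ has $N^4$ terms. Instead, we sample the term hierarchically. A commutator with the current monomial is nonzero only when the overlap parity is odd, and we use locality to restrict $V$'s indices to the $\Or(1)$ terms touching the light cone. Each of those indices is then expanded through a row of $R$ of length $N$, so that we sample $\gamma_l$ with probability proportional to $|R_{jl}|$. The hardest step is reconciling this importance sampling with the exact $1$-norm weights while keeping the per-step cost $\Or(N)$. Each Pfaffian evaluation at degree $\Or(K)$ is polylogarithmic. The $\Or(N^3)$ cost of computing $R(-t_j)$ at fresh random times per sample would be too expensive. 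To remove it, we diagonalize $h$ once at cost $\Or(N^3)$, after which each row of $R(s)$ costs $\Or(N^2)$ for any $s$. With $K$ steps per sample, this yields the claimed $\widetilde{\OO}(\epsilon^{-2}M^3e^{2d_0}N^3)$ up to the bookkeeping of the factor $M$.
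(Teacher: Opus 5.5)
Your outer architecture matches the paper's:
\begin{itemize}
\item truncation at $K=\Or(d_0+\log(M/\epsilon))$ via Theorem~\ref{thm:convergence_maj_one};
\item absorbing $W_t=\Or((1+|t|)^{2D})$ into $c$;
\item a pathwise weight bound fed into Hoeffding;
\item Pfaffians for the Gaussian state $\tau^0_{-t}(\rho_0)$.
\end{itemize}
Diagonalizing $\htilde$ once is also a legitimate saving. The gap is the sampling subroutine, and that is where both the $N^3$ runtime and the $(1+|t|)^{2D}$ regime come from.

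Sampling from the \emph{exact} operator $i[\tau^0_{-t_q}(V),\nu]$ would be fine for variance: the weight is $\|[\tau^0_{-t_q}(V),\nu]\|_\majone\le\deg(\nu)W_t$ by Corollary~\ref{cor:TQTPp_1to1norm}. But forming that operator by expanding $\tau^0_{-t_q}(V)$ takes order $N^5$ work per layer, so you do not get $N^3$.

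Your hierarchical replacement does not work as stated.
\begin{itemize}
\item The terms of $V$ whose evolved image fails to commute with $\nu$ are not $\Or(1)$ in number. Because $e^{i\htilde t}$ has nonzero tails everywhere, cutting to a light cone biases the estimator.
\item Even inside the cone there are order $|t|^D$ terms. If you first pick $v_{abcd}$ and then sample each output index given $a,b,c,d$, the normalizer is roughly $|t|^D L_t^4$ (or $N L_t^4$ with no cutoff), where $L_t=1+C_D\lceil 4e\frakd|t|r_0\rceil^{D/2}$. The paper's normalizer is $\frakd L_t^4$.
\item That shrinks the regime to about $\lambda|t|(1+|t|)^{3D}<c$. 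Your weight is then no longer ``exactly the quantity controlled in the proof.''
\end{itemize}
You flag this step as the hardest one and leave it open.

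The missing idea is to condition on the overlap with $\nu$ \emph{before} choosing the term of $V$. The free evolution should act on $\nu$'s mode rather than on $V$. Combine the Leibniz expansion \eqref{eq:commutator_expansion} with the identity \eqref{eq:three_step_commute}, $[\tau^0_{-t}(V),\gamma_j]=\tau^0_{-t}([V,\tau^0_t(\gamma_j)])$. \textsc{SampleNext} then:
\begin{itemize}
\item picks $\nu_j$ uniformly among the $J$ factors of $\nu$ (weight $J$);
\item samples $\gamma_x$ from $\tau^0_t(\nu_j)$ (weight $\le L_t$);
\item samples from $i[V,\gamma_x]$, which involves at most $\frakd$ terms of the \emph{unevolved} local $V$ (weight $\le 2\frakd$);
\item samples from $\tau^0_{-t}$ of the resulting cubic monomial (weight $\le L_t^3$ by submultiplicativity).
\end{itemize}
Each stage is bounded by the matching factor in Corollary~\ref{cor:TQTPp_1to1norm}. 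So each layer contributes at most $\deg(\nu)\,2\frakd L_t^4=\deg(\nu)W_t$ almost surely, recovering the $(1+|t|)^{2D}$ regime. Each stage touches only $\Or(1)$ length-$2N$ Majorana vectors plus one triple product, which gives $\Or(N^3)$ per call.
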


As an example, we consider the Anderson model \cite{stolz2011introanderson}. When disorder is added to the system, quantum transport is greatly suppressed. This property is called \textit{dynamical localization}. More precisely, it means that local single-particle wave functions remain localized in space uniformly for all times.
Therefore, we expect that in the Anderson model,
operator spreading is more tightly bounded, which may lead to faster convergence and hence a higher time threshold.
Indeed, we show a faster convergence in the following corollary. To translate dynamical localization to a uniform probability bound on the transport, we restrict the Hamiltonian to a subregion of radius $R = \OO(|t| + \log \frac{1}{\epsilon})$
by sacrificing $\epsilon$ precision in the time dynamics simulation and apply a union bound to this subregion.
We show that the formal statement of Theorem~\ref{thm:convergence_maj_one} remains valid, while the quantity $W_t$, which captures the overall effect of operator spreading, can be improved to $\poly \log ( 1 + |t| )$, hence extending the parameter regime of our algorithm.

\begin{corollary}[Corollary~\ref{cor:samplecomplexity_anderson}, informal]
    Consider locally
    interacting fermions on a $D$-dimensional lattice with disorder large enough to cause the spectrum to localize. Let $N$ be the number of fermionic modes,
    $A=\sum_{i=1}^M a_i O_i$ be an observable given as a list of Majorana monomials $\{O_i\}_i$ with coefficients $\{a_i\}_i$ satisfying $|a_i|\le 1$, and $\rho_0$ be a Gaussian state.
    Then when $\lambda|t| \, \operatorname{polylog} (1/\epsilon, 1 + |t|) < 1$,
    there is a sampling algorithm that computes $\Tr(\rho_0\, e^{iHt } A e^{-iHt })$ to within additive error $\epsilon$ with probability at least $1-\delta$ in time $\widetilde{\OO}\left(\epsilon^{-2}M^3 e^{2d_0} N^3 \right)$, where $d_0 = \deg(A)$ and $\widetilde{\OO}(\cdot)$ hides logarithmic factors in $M, 1/\epsilon$ and $1/\delta$.
\end{corollary}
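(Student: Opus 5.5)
The plan is to reduce the Anderson corollary to the general sampling theorem (Theorem~\ref{thm:sample_complexity_l1}) by replacing the only place where geometry enters, namely the bound $W_t$ of Theorem~\ref{thm:convergence_maj_one}, with a localization-based bound. The proof of Theorem~\ref{thm:convergence_maj_one} uses the lattice only through Lemma~\ref{lem:eiht_bnd}, which controls $\|\tau^0_{s}(\gamma_i)\|_{\majone}$ for a single Majorana mode. Under free evolution, $\tau^0_s(\gamma_i)=\sum_j (e^{4ish})_{ij}\gamma_j$ up to convention. Its Majorana $1$-norm is therefore $\sum_j |(e^{4ish})_{ij}|$, an $\ell_1$ row sum of the single-particle propagator.

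\textbf{Step 1: truncation to a finite region.} Fix the support of $A$ and a ball $B_R$ of radius $R=\OO(|t|+\log\frac1\epsilon)$ around it. A Lieb--Robinson bound for the full interacting $H$ (finite range, bounded degree, $\lambda$ small) shows that replacing $H$ by $H_R$, its restriction to $B_R$, changes $e^{iHt}Ae^{-iHt}$ by at most $\epsilon$ in operator norm. It then suffices to run the algorithm for $H_R$. This is what makes a union bound over finitely many sites meaningful.

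\textbf{Step 2: localization estimate.} We assume dynamical localization in expectation for the single-particle Anderson Hamiltonian: $\E\sup_s |(e^{ish})_{ij}|\le C e^{-\mu|i-j|}$. By Markov's inequality and a union bound over the $\poly(R)$ pairs $(i,j)$ in $B_R$, with probability $1-\delta'$ every pair satisfies
\[
\sup_s |(e^{ish_R})_{ij}|\le C\,\poly(R)\,\delta'^{-1} e^{-\mu|i-j|}.
\]
Equivalently, the bound is $\min\{1,\ \cdot\ \}$. Summing over $j$, the row sums are bounded by the number of $j$ with $|i-j|\lesssim \mu^{-1}\log(\poly(R)/\delta')$ plus a geometric tail. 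That gives $\|\tau^0_s(\gamma_i)\|_{\majone}\le \poly\log(R/\delta')=\poly\log(1/\epsilon,1+|t|)$, uniformly in $s$ and $i$. This holds for the realized disorder, with high probability.

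\textbf{Step 3: rerun the convergence proof.} Substitute this bound for Lemma~\ref{lem:eiht_bnd} in the proof of Theorem~\ref{thm:convergence_maj_one}. The fourth power appearing in $W_t$ comes from the four Majoranas in each term of $V$, so we obtain the same statement with $W_t$ replaced by $2\frakd\,\poly\log(1/\epsilon,1+|t|)$. The condition for geometric decay of $\|\widetilde{A}^{(k)}\|_{\majone}$ becomes $\lambda|t|\,\poly\log<1$. The sampling analysis of Theorem~\ref{thm:sample_complexity_l1} uses only these $1$-norm bounds, so its runtime carries over. The relevant system size is at most $N$, and the failure probabilities $\delta'$ and $\delta$ combine.

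\textbf{Main obstacle.} The main obstacle is Step 2. Localization is an averaged statement, so we must convert it into a deterministic uniform-in-time bound without losing more than polylog factors. We also need to check that the truncated Hamiltonian $h_R$ inherits the localization estimate (restrictions of Anderson models to boxes do satisfy the standard fractional-moment bounds). A secondary subtlety is that Step 1 needs the Lieb--Robinson truncation for the interacting dynamics with $R$ only linear in $|t|$. We also need the sampler's $N^3$ cost to be charged to the original $N$.
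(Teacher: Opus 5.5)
Your proposal is correct and essentially follows the paper's proof: truncate to $\mathfrak{B}_R(\supp(A))$ via a Lieb--Robinson bound with $R=\Theta(|t|+\log(Md_0/\epsilon))$; then use Markov's inequality plus a union bound over this finite region to turn the averaged dynamical-localization estimate into a uniform-in-time polylogarithmic bound on $\|e^{i\htilde t}\|_{\onetoone}$; finally substitute that bound into $W_t$ in Theorems~\ref{thm:convergence_maj_one} and~\ref{thm:sample_complexity_l1}. The differences are cosmetic. You apply Markov entrywise, union-bound over pairs, bound near entries by $1$, and assume localization directly for the Majorana-basis propagator, whereas the paper transfers the one-particle bound to $e^{i\htilde t}$ via particle-number conservation, and in Lemma~\ref{lem:volume_dependent_1to1norm} applies Markov to each mode's tail sum, union-bounds over modes, and controls the near region by Cauchy--Schwarz with unitarity; this gives $r^{D/2}$ instead of your $r^{D}$ (so $\log^{2D}$ rather than $\log^{4D}$ in $W_t$), which does not matter for the polylog statement.
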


\medskip
\noindent \textbf{Comparison with previous works.}
Besides Majorana Propagation, which we discussed extensively above, there are many other numerical algorithms for simulating real-time evolution for interacting fermionic systems. Exact diagonalization is a viable approach for small quantum systems, but cannot be applied to large systems due to its runtime scaling. For the weakly interacting fermions setting, methods based on time-dependent variational principle (TDVP) and a fermionic Gaussian ansatz, are expected to have good performance \cite{ShiDemlerCirac2018variational,Mclachlan1964time}. However, due to the limitation of the ansatz, the accuracy achieved by these methods is not systematically improvable. Other ans{\" a}tzes such as matrix-product states and fermionic tensor network states offer more flexibility, but they are typically limited in the representable geometry or do not have provable guarantee for efficient contraction \cite{Cazalilla2002time,WhiteFeiguin2004real,DaiWuEtAlZaletel2025fermionic}.
Monte Carlo methods have been extensively used in many tasks for quantum systems, which include real-time dynamics simulation. Quantum Monte Carlo (QMC) algorithms based on perturbative expansion have been proposed for interacting fermionic systems \cite{werner2010weak,bertrand2019quantum,moutenet2019cancellation,church2021real,cohen2015taming,antipov2017currents}. These methods have achieved impressive empirical success, and some of them have successfully mitigated the dynamical sign problem.
However, the most successful long-time perturbative methods have primarily been demonstrated for quantum impurity problems, where the interacting region is small, while in application to more general interacting systems, a dynamical sign problem can still appear, whose severity can grow exponentially with time, interaction strength, and system size.

In spirit, the sampling algorithm of Theorem~\ref{thm:sample_complexity_l1} combines the continuous-time interaction-expansion viewpoint of continuous-time QMC \cite{rubtsov2005continuous,gull2011continuous,werner2010weak} with the real-time perturbative sampling strategy of diagrammatic QMC \cite{bertrand2019quantum,moutenet2019cancellation}. However, instead of stochastically summing Green's-function diagrams or determinant weights, our algorithm samples Majorana monomials in a Heisenberg-picture Dyson expansion. A distinguishing feature of our approach is that it comes with a rigorous variance and runtime analysis. In particular, by tracking operator growth directly in the Heisenberg picture, we obtain an explicit parameter regime for evolution time $t$ and interaction strength $\lambda$, specified in Theorem~\ref{thm:sample_complexity_l1}, in which the number of samples required to estimate a local observable does not grow with the system size $N$. Thus, within this regime, the sampling part of the algorithm avoids the exponential-in-$N$ variance growth associated with the dynamical sign problem.

The runtime reduction due to Anderson localization of $H^0$ in Corollary~\ref{cor:samplecomplexity_anderson} is conceptually related to a line of work that exploits localization to improve the complexity of simulating quantum dynamics \cite{KimChandranAbanin2014local,huang2015efficient,ehrenberg2022simulation,DeTomasi2019efficiently,chapman2018classical}. In one dimension, many-body localization can lead to logarithmic light cones and consequently to efficient local simulation of real-time dynamics, as shown in \cite{KimChandranAbanin2014local,huang2015efficient}. The complexity-theoretic framework of \cite{ehrenberg2022simulation} can in principle apply more broadly, but its long-time simulation algorithm assumes access to the l-bit description of the Hamiltonian, including a quasi-local unitary that diagonalizes it. The method of \cite{DeTomasi2019efficiently} was also applied to two-dimensional systems, but it proceeds through an approximate effective l-bit Hamiltonian constructed from localized single-particle orbitals, and its asymptotic complexity and error scaling are not established as a rigorous complexity theorem. In comparison, Corollary~\ref{cor:samplecomplexity_anderson} gives a rigorous runtime improvement in arbitrary spatial dimension, within the weak-interaction regime, which is also assumed for two-dimensional simulations in \cite{DeTomasi2019efficiently}.

\medskip
\noindent \textbf{Organization.}
In Section~\ref{sec:prelim} we define the notation we use.
In Section~\ref{sec:convergence} we show convergence theorems in terms of (normalized) Frobenius norm (Section~\ref{sec:convergence_in_F_norm}) and Majorana $1$-norm (Section~\ref{sec:convergence_in_one_norm}). In Section~\ref{sec:algorithm} we present the simulation algorithm based on sampling. In Section~\ref{sec:Anderson} we discuss Anderson localization and the application of our algorithm to it.

\section{Preliminaries}\label{sec:prelim}

Consider $N$ fermionic modes where each mode $i$ is associated with an annihilation operator $\hat{\psi}_i$ and a creation operator $\hat{\psi}^\dag _i$.
These creation and annihilation operators obey the canonical anti-commutation relations, thus generating a CAR algebra. Another convenient way to represent operators acting on the Fock space is through the Majorana operators, which are defined by
\begin{equation}
    \begin{gathered}
      \gamma_{2i-1} = \hat{\psi}^\dag_i + \hat{\psi}_i, \quad
      \gamma_{2i} = \ii (\hat{\psi}^\dag_i - \hat{\psi}_i).
    \end{gathered}
\end{equation}
We denote by $N$ the number of modes.
Let $\Gamma = [N]$ index all the fermionic modes.
The Majorana operators have the following basic properties:
\begin{enumerate}
  \item Hermiticity. $\gamma_i = \gamma_i ^\dag$.
  \item Operator norm one. $\| \gamma_i \| = 1$.
  \item Anti-commutation. $\{ \gamma_i, \gamma_j \} = 2 \delta_{ij}.$ In particular, $\gamma_i^2=1$, $\forall i$.
\end{enumerate}
Define Majorana monomials for a subset $S \subseteq [2N]$ as
\begin{equation}\label{eq:Majorana_monomial}
    \gamma_S := \ii^{|S|(|S|-1)/2 \!\!\!\!\mod 2} \, \prod_{i\in S} \gamma_i,
\end{equation} where the product is ordered increasingly by $i$, and the prefactor included to ensure Hermiticity. We use $\frakI (\cdot) $ to denote the function that returns the prefactor of a Majorana monomial. We use $\majsupp (\gamma_S) = S $ to denote the support in terms of the Majorana index. For $A =\sum_{S\subseteq [2N]} a_S \gamma_S$, we define $\supp(A) = \cup_{|a_S| \neq 0} S$.
Denote by $M_\Gamma$ all the Majorana monomials on $\Gamma$. The commutator of two Majorana monomials can be computed in the following way:

\begin{proposition}\label{prop:monomial_commutator}
    Let $|T|$ be even. When $|S\cap T|$ is odd,
    \begin{equation}
        [\gamma_S, \gamma_T ] = 2 \zeta_{S,T} \, \gamma_{S\Delta T},
    \end{equation} where $S \Delta T = S\cup T - S\cap T$, $\zeta_{S,T}$ is a complex number of modulus $1$ whose value also depends on the order of $S$ and $T$. $[\gamma_S, \gamma_T ] =0$ when $|S\cap T|$ is even.
\end{proposition}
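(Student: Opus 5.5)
The plan is to write each monomial as a product and move the factors of $\gamma_T$ past those of $\gamma_S$ one at a time, counting signs. Up to the Hermiticity prefactors $\frakI(\gamma_S)$ and $\frakI(\gamma_T)$, which are scalars and can be pulled out, we have $\gamma_S = \prod_{i\in S}\gamma_i$ and $\gamma_T=\prod_{j\in T}\gamma_j$.

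The basic input is the anticommutation relation $\{\gamma_i,\gamma_j\}=2\delta_{ij}$. It says that $\gamma_i\gamma_j=-\gamma_j\gamma_i$ for $i\neq j$, while $\gamma_i$ commutes with itself. Now move the single Majorana $\gamma_j$, for $j\in T$, past the product $\gamma_S$.
\begin{itemize}
\item If $j\notin S$, we pick up $(-1)^{|S|}$.
\item If $j\in S$, we pick up $(-1)^{|S|-1}$, since $\gamma_j$ anticommutes with every factor of $\gamma_S$ except itself.
\end{itemize}
Multiplying over all $j\in T$ gives
\[
\gamma_T\gamma_S=(-1)^{|S||T|-|S\cap T|}\,\gamma_S\gamma_T .
\]
Since $|T|$ is even, the exponent $|S||T|$ is even, so the sign reduces to $(-1)^{|S\cap T|}$.

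Consider the two parity cases of $|S\cap T|$.
\begin{itemize}
\item If $|S\cap T|$ is even, the monomials commute and $[\gamma_S,\gamma_T]=0$.
\item If $|S\cap T|$ is odd, they anticommute, so $[\gamma_S,\gamma_T]=2\gamma_S\gamma_T$.
\end{itemize}

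It remains to identify $\gamma_S\gamma_T$ in the odd case. Reorder the product of the ordered factors of $S$ followed by those of $T$ into increasing order. Transpositions of distinct Majoranas contribute only signs, and each repeated index $i\in S\cap T$ produces $\gamma_i^2=1$. The result is $\pm\prod_{i\in S\Delta T}\gamma_i$.

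Reinserting the prefactors, we get
\[
\gamma_S\gamma_T=\frac{\frakI(\gamma_S)\,\frakI(\gamma_T)}{\frakI(\gamma_{S\Delta T})}\,(\pm1)\,\gamma_{S\Delta T}.
\]
Each prefactor is a power of $\ii$, so this coefficient $\zeta_{S,T}$ has modulus one. It depends on the ordering because $\gamma_S\gamma_T=-\gamma_T\gamma_S$ in this case.

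The only delicate point is the sign bookkeeping in the commutation count, specifically the $-1$ in the exponent coming from an index shared by $S$ and $T$. Everything else is routine, since we never need the explicit value of $\zeta_{S,T}$, only that $|\zeta_{S,T}|=1$.
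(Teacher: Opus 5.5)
Your proof is correct. Since $|T|$ is even, the sign count $\gamma_T\gamma_S=(-1)^{|S||T|-|S\cap T|}\gamma_S\gamma_T$ gives the parity dichotomy. Collapsing repeated indices via $\gamma_i^2=1$ then shows that $\gamma_S\gamma_T$ is a unimodular multiple of $\gamma_{S\Delta T}$, with the ratio of prefactors being a power of $\ii$.

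The paper states this proposition without proof as a standard fact, so there is no alternative route to compare; yours is the standard sign-counting argument.
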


The operator algebra $\calA_\Gamma$ acting on the Fock space $ \mathcal{F}_\Gamma = \operatorname{span}\{ \ket{ n_1, n_2, \cdots, n_{N} } \}_{n_i = 0, 1}$ is generated by the annihilation and creation operators $\{ \hat{\psi}_i, \hat{\psi}^\dag_i\} _{i\in [N]}$ and also by the Majorana operators $\{\gamma_i\}_{i\in [2N]}$. For $S=\{x_1,x_2,\cdots,x_{|S|}\} \subseteq \Gamma$, denote by
$\calF_S =\operatorname{span}\{ \ket{ n_{x_1}, n_{x_2}, \cdots, n_{x_{|S|}} } \}_{n_{x_k} = 0, 1}$
the Fock space on the subset $S$ of modes, and denote by $\calA_S$ the corresponding operator algebra on $\calF_S$. Write $\calA_S^H$ for the subspace of Hermitian operators in $\calA_S$. Any operator $A \in \calA_\Gamma$ has a unique expansion in the Majorana monomial basis,
\begin{equation}
    A = \sum_{S\subseteq [2N]} a_S \: \gamma_S.
\end{equation}
When $A \in \mathcal{A}_\Gamma^H$, the coefficients are all real.
If we index all the Majorana monomials in $M_\Gamma$ in any particular order, we can then represent an operator $A$ as a vector:
\begin{equation} \label{eq:vector_rep}
    \vecrm(A) := (a_S)_{S\subseteq [2N]}.
\end{equation}
The coefficients can be extracted by taking inner product:
\begin{definition}
    We denote the inner product $\braket{A, B} = \overline{\Tr}(A^\dag B)$, where $\overline{\Tr}(\cdot) = \frac{1}{\Tr(I)} \Tr(\cdot)$ is the normalized trace.
\end{definition}
For $S', S \subseteq [2N]$,
we have $\braket{\gamma_{S'}, \gamma_S} = \delta_{S'S}$. The coefficients in \eqref{eq:vector_rep} can therefore be computed as $a_S = \braket{\gamma_S, A}$.

Any superoperator $\mathcal{E}: \mathcal{A}_\Gamma \to \mathcal{A}_\Gamma$ has an induced matrix representation $E$ defined by:
\begin{equation}\label{eq:matriX_rep}
    \vecrm(\mathcal{E}(A)) = E \vecrm (A).
\end{equation}
We write $\gamma \in A$ to mean that the coefficient of $\gamma$ in $A$ is non-zero.
The \textit{degree} of a single Majorana monomial $\gamma_S$ is $|S|$. The degree of an observable $A$ is defined as
\begin{equation} \label{eq:deg_majorana}
    \deg (A) = \max_{\gamma \in A} \deg(\gamma).
\end{equation}
We list the norms we use throughout the paper.
\begin{definition} \label{def:norm_notation}
    Let $\mathcal{V}=\mathbb{C}^n$. We use the following norms for a vector $v = (v_1, v_2, \cdots,$ $ v_n)^\top \in \mathcal{V}$.

    \begin{enumerate}
        \item Euclidean norm: $\|v\| = \sqrt{\sum_{i} |v_i|^2 }$.
        \item $\ell_1$-norm: $\|v\|_{\ell_1} = \sum_{i} |v_i|$.
        \item $\ell_\infty$-norm: $\|v\|_{\ell_\infty} = \max_{i} |v_i|$.
    \end{enumerate}
    We use the following norms for a linear operator $X$ on the vector space $\mathcal{V}$.
    \begin{enumerate}
        \item Operator norm: $\|X\| = \sup_{v \in \mathcal{V}, \|v\| \le 1 } \| X v \| $.
        \item (Normalized) Frobenius norm $\|X\|_F = \sqrt{\overline{\Tr} (X^\dag X) }$, where $\overline{\Tr} = \frac{1}{\Tr(I)} \Tr$ is the normalized trace.
        \item Induced $1$-$1$ norm: $\|X\|_{\onetoone} := \sup_{v \in \mathcal{V}, \|v\|_{\ell_1} \le 1} \|X v \|_{\ell_1} = \max_{j} \sum_{i} |X_{ij}|$.
        \item Induced $\infty$-$\infty$ norm: $\|X\|_{\inftoinf} := \sup_{v \in \mathcal{V}, \|v\|_{\ell_\infty} \le 1} \|X v \|_{\ell_\infty} = \max_{i} \sum_{j} |X_{ij}|$.
    \end{enumerate}
    We use the following norms for an operator $A$ in the operator algebra $\mathcal{A}_\Gamma$.
    \begin{enumerate}
        \item Operator norm: $\|A\| = \sup_{\psi \in \mathcal{F}_\Gamma, \|\psi\| \le 1} \| A \psi \|$, where $\|\psi\|$ denotes the Hilbert space norm of the state $\psi \in \mathcal{F}_\Gamma$.
        \item (Normalized) Frobenius norm: $\|A\|_F = \sqrt{\overline{\Tr} (A^\dag A) }$, where $\overline{\Tr}(\cdot) = \frac{1}{\Tr(I)} \Tr(\cdot)$ is the normalized trace.
        \item Majorana $1$-norm: $\|A\|_\majone := \|\vecrm(A)\|_{\ell_1}$, where $\vecrm(\cdot)$ is the vectorization in the Majorana monomial basis defined in \eqref{eq:vector_rep} and $\| \cdot \|_{\ell_1}$ is the $\ell_1$ norm for vectors defined above.
    \end{enumerate}
\end{definition}

\section{Convergence of perturbative expansion with locality}\label{sec:convergence}

In this section we will derive bounds for the Frobenius norm and the Majorana $1$-norm of the integrand
\begin{equation}\label{eq:integrand_kth_order}
\Big[ \tau^0_{-t_k} (V), \big[\tau^0_{-t_{k-1}}(V), \cdots, [\tau^0_{-t_1}(V), A]\big] \Big]
\end{equation}
in $\widetilde{A}^{(k)}(t)$ \eqref{eq:kth_order}.

The general methodology in this paper is to view each $[\tau^0_{-t_j}(V), \cdot]$ as a linear map on the vector space of Majorana polynomials. Choosing the Majorana monomials as the basis, we represent any operator by the vector of its Majorana-monomial coefficients, as defined in \eqref{eq:vector_rep}. Accordingly, in Section \ref{sec:prelim} we defined the Euclidean norm and the $\ell_1$ norm of vectors, which correspond to the Frobenius norm and the Majorana $1$-norm of operators, respectively. The growth of the norms of the observable $A$ can thus be tracked by studying the corresponding norms of the linear map $[\tau^0_{-t_j}(V), \cdot]$, i.e., its operator norm and its induced $\onetoone$ norm.

We also avoid dealing directly with the quasi-local $\tau^0_{-t}(V)$ by using the identity:
\begin{equation}
\label{eq:three_step_commute}
    [ \tau^0_{-t} (V), \,\cdot\,] = \tau^0_{-t} \left( [ V, \tau^0_{t}( \,\cdot\,) ] \right).
\end{equation}
This decomposition can be interpreted as operator spreading through three processes: time evolution $\tau^0_{t}$, the commutator $[V, \cdot]$, and then time evolution $\tau^0_{-t}$.
Let $T_{t}$, $T_{-t}$, and $Q$ be the matrix representations
of $\tau^0_{t}$, $\tau^0_{-t}$, and $[V, \,\cdot\,]$ as operators on $\calA_\Gamma$, respectively.
More precisely, indexed by subsets of $[2N]$, 
\begin{subequations}
    \begin{equation} \label{eq:Tt_def}
            (T_t)_{S'S} = \langle \gamma_{S'} , \tau^0_t(\gamma_S) \rangle, \quad \text{for } t \in \mathbb{R}
    \end{equation}
    and
    \begin{equation}\label{eq:mat_rep_Q}
        (Q)_{S'S} = \langle \gamma_{S'} , [V, \gamma_S] \rangle.
    \end{equation}
\end{subequations}
$T_{t}$ is unitary.
$Q$ is antisymmetric, i.e., $(Q)_{S'S} = - (Q)_{SS'}$, since $\langle \gamma_{S'} , [V, \gamma_S] \rangle = - \langle \gamma_{S} , [V, \gamma_{S'}] \rangle$.
In effect, we have written the induced matrix representation of $[ \tau^0_{-t} (V), \,\cdot\,]$ as $T_{-t} Q T_t$.
We have
\begin{subequations}\label{eq:vec_comm}
\begin{equation}
    \vecrm \left( \left[\tau^0_{-t} (V), \vecrm^{-1}(\,\cdot\,) \right] \right) = T_{-t} Q T_t,
\end{equation}
\begin{equation} \label{eq:vec_comm_k}
    \vecrm\left(\Big[ \tau^0_{-t_k} (V), \big[\tau^0_{-t_{k-1}}(V), \cdots, [\tau^0_{-t_1}(V), A]\big] \Big]\right) = T_{-t_k} Q T_{t_k} \cdots T_{-t_1} Q T_{t_{1}} \vecrm(A).
\end{equation}
\end{subequations}

For the Frobenius norm, since the free-evolution matrices $T_{\pm t}$ are unitary, their contribution to the norm growth is trivial, i.e. $\|T_{\pm t}\| = 1$. The main task is therefore to control $Q$ on low-degree Majorana polynomials.
By contrast, for the Majorana 1-norm, $\|T_{\pm t}\|_\onetoone$ can grow with time and captures the spreading of Majorana operators under the free evolution. Controlling this growth requires bounded free-Hamiltonian coefficients together with geometric locality on a finite-dimensional lattice. We treat these two cases in Sections \ref{sec:convergence_in_F_norm} and \ref{sec:convergence_in_one_norm}, respectively.

\subsection{Convergence in normalized Frobenius norm with bounded-degree locality}
\label{sec:convergence_in_F_norm}

In this section, we prove the convergence of \eqref{eq:series_expansion} for bounded-degree Hamiltonians defined as follows and show a quasi-polynomial-time algorithm to evaluate it.

\begin{definition}[Bounded-degree locality] \label{def:bounded_deg_local}
    A Hamiltonian $H = \sum_{k} H_k$ is $(\frakd, \kappa)$-bounded-degree local if, for any single Majorana mode, the number of terms in $H$ involving it is at most a constant $\frakd$, and $\deg(H_k) \le \kappa$ for some constant $\kappa$ for all $k$.
\end{definition}

Denote by $\calP_{p}$ the projection onto the subspace of the vector representation of Majorana operators with degree at most $p$.  We will upper bound several matrix norms of $Q\calP_p$ in this section, where $Q$ is the induced matrix representation of $[V, \cdot]$ as operators on $\calA_\Gamma$.

To capture the growth of degree, let
\begin{equation} \label{eq:vec_comm_A_j}
    A_j = \big[\tau^0_{-t_{j} } (V), \cdots [ \tau^0_{-t_1} (V) , A  ]\big]
\end{equation}
Note that $\tau^0_{-t}(\cdot)$ does not change the degree, while $[V, \cdot ]$ increases the degree by at most $2$. Therefore $\deg(A_j) \le 2j + \deg(A) = 2j + d_0$.
\eqref{eq:vec_comm_A_j} can be alternatively written as:
\begin{equation}\label{eq:restricted_transition_series}
    \vecrm(A_j) = T_{-t_j} \big(Q \calP_{2j-2+d_0} \big)T_{t_j}\cdots T_{-t_1} \big(Q \calP_{d_0}\big) T_{t_1} \vecrm(A).
\end{equation}
We now upper bound the induced $\onetoone$, the induced $\inftoinf$, and the operator norm of $Q\calP_p$. The proof of the convergence (Theorem \ref{thm:Fnorm_convergence_intro}) then follows.

\begin{lemma} \label{lem:commutator_matrix_norm}
Consider a quartic interaction Hamiltonian $V$ that is $(\frakd, 4)$-bounded-degree local (Definition~\ref{def:bounded_deg_local}), with all coefficients bounded by $1$ in absolute value. Let $Q$ be the matrix representation of $[V, \cdot]$, as defined in \eqref{eq:mat_rep_Q}.
    Then, for $\|\cdot\|_\beta$ being the induced $1$-$1$ norm, the induced $\infty$-$\infty$ norm, or the operator norm:
    \begin{equation*}
        \|Q \calP_p\|_\beta \le 2p \frakd.
    \end{equation*}
\end{lemma}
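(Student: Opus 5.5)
We will bound the two induced norms $\|Q\calP_p\|_{\onetoone}$ and $\|Q\calP_p\|_{\inftoinf}$ directly from the matrix entries, and then get the operator norm by interpolation. Indeed, $\|X\| \le \sqrt{\|X\|_{\onetoone}\|X\|_{\inftoinf}}$ holds by the Schur test (or Riesz--Thorin). So it suffices to show that both induced norms are at most $2p\frakd$.

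\textbf{Column sums.} Write $V=\sum_{a} w_a \gamma_{T_a}$ as a sum of quartic Majorana monomials, grouping $v_{ijkl}$ over permutations so that each distinct four-element set $T_a$ appears once. For a column $S$ with $|S|\le p$, Proposition~\ref{prop:monomial_commutator} gives
\[
[V,\gamma_S]=\sum_a w_a\,[\gamma_{T_a},\gamma_S],
\]
and this term vanishes unless $|T_a\cap S|$ is odd. In particular a term contributes only if $T_a\cap S\neq\emptyset$. By $(\frakd,4)$-bounded-degree locality, each of the at most $p$ indices in $S$ lies in at most $\frakd$ terms. Hence at most $p\frakd$ terms contribute, each producing $2\zeta\,\gamma_{S\Delta T_a}$ with $|\zeta|=1$. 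This gives $\sum_{S'}|Q_{S'S}|\le 2\sum_{a:\,T_a\cap S\ne\emptyset}|w_a|$. The main obstacle is the normalization of the coefficients: we must interpret ``coefficients bounded by $1$'' as $|w_a|\le 1$ for the grouped monomial coefficients, i.e.\ the terms $H_k$ in Definition~\ref{def:bounded_deg_local}. If instead the sum over the $4!$ orderings of $v_{ijkl}$ were taken literally, a constant factor $24$ would appear. We will adopt the convention that terms are counted as distinct monomials with coefficient at most $1$. Under that convention the column sum is at most $2p\frakd$.

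\textbf{Row sums.} For $\|Q\calP_p\|_{\inftoinf}$ we fix a row $S'$ and sum over columns $S$ with $|S|\le p$. Antisymmetry of $Q$ gives $Q_{S'S}=-Q_{SS'}$, so the entries in row $S'$ are the entries in column $S'$ of $Q$, restricted to outputs $S$ of degree at most $p$. A nonzero entry requires $S=S'\Delta T_a$ for some term $a$ with $|T_a\cap S'|$ odd, and $|S|\le p$. Since $|S'\Delta T_a|=|S'|+4-2|T_a\cap S'|$ with $|T_a\cap S'|\in\{1,3\}$, we get $|S'|\le p+2$. The contributing terms must intersect $S$, because $T_a\cap S=T_a\setminus S'$ has size $4-|T_a\cap S'|$, which is odd and hence nonzero. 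We therefore count terms via $S$ rather than $S'$, noting that distinct terms give distinct $S$. Each term $a$ with $|T_a\cap S|$ odd yields one candidate $S$, and we bound the number of such terms. Working this out to get exactly $2p\frakd$ is the step needing care. We would charge each contributing term to an index in $S\cap T_a$, but $S$ varies with $a$. The fallback is to note that $|T_a\cap S'|$ is odd and $T_a$ meets $S'$, with $|S'|\le p+2$. If this only yields $2(p+2)\frakd$, we would refine by splitting into the cases $|T_a\cap S'|=1$ and $|T_a\cap S'|=3$. In the case $|T_a\cap S'|=1$ we have $|S|=|S'|+2\le p$, so $|S'|\le p-2$ and the count is at most $(p-2)\frakd$. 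In the case $|T_a\cap S'|=3$ the count is at most $|S'|\frakd$; here we need a sharper argument, using $|S|=|S'|-2\le p$, or else we simply accept the slightly weaker bound if it suffices downstream.

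Finally, combining the two induced-norm bounds through the Schur test gives $\|Q\calP_p\|\le 2p\frakd$.
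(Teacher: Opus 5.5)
Your overall structure is the same as the paper's: column sums, antisymmetry for the row sums, then $\|X\|\le\sqrt{\|X\|_{\onetoone}\|X\|_{\inftoinf}}$. Your column-sum bound is correct. On normalization, the paper counts each ordered tuple $(i,j,k,l)$ with $v_{ijkl}\neq 0$ as a separate term when defining $\frakd$ (it says so explicitly for $H^0$ in the discussion of $\widetilde{h}$), so no factor $24$ arises; your grouped convention is equivalent.

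However, the row-sum bound, the only nontrivial part of the lemma, is left open. You end with ``we need a sharper argument \dots\ or else we simply accept the slightly weaker bound.'' Accepting $2(p+2)\frakd$ does not prove the statement. Through the Schur test it would only give $\|Q\calP_p\|\le 2\frakd\sqrt{p(p+2)}$.

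\textbf{Rows with $|S'|\le p$.} Do not split by type here. By antisymmetry, the row sum is at most the full column sum of $Q$ at column $S'$. Your column argument works for any column and gives $2|S'|\frakd\le 2p\frakd$. Adding separate type-$1$ and type-$3$ counts, as your sketch suggests, would overcount.

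\textbf{Rows with $|S'|\in\{p+1,p+2\}$.} As you note, type-$1$ terms force $|S'|\le p-2$, so only terms with $|T_a\cap S'|=3$ contribute. These cannot be bounded by charging each term to a single mode; that only gives $|S'|\frakd$. The missing idea is to double count incidences:
\begin{itemize}
\item The number of pairs (mode in $S'$, term containing that mode) is at most $|S'|\frakd$.
\item Each term with $|T_a\cap S'|=3$ accounts for exactly three such pairs.
\item Hence there are at most $|S'|\frakd/3$ such terms, and the row sum is at most $\frac{2}{3}(p+2)\frakd$.
\end{itemize}
This is at most $2p\frakd$ for $p\ge 1$. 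The case $p=0$ is trivial because $[V,I]=0$, so $Q\calP_0=0$.

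This factor-$1/3$ saving from the three-fold overlap is exactly what the paper's bipartite incidence-graph argument supplies. Without it, your proof does not reach $2p\frakd$ in either the $\infty$-$\infty$ norm or the operator norm.
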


\begin{proof}
Let $\Omega$ denote the set of all Majorana modes.
For $S, S' \subseteq \Omega$, $(Q)_{S'S} = \braket{\gamma_{S'}, [V, \gamma_S]}$, where $\braket{X, Y} = \frac{1}{\Tr(I)} \Tr(X^\dag Y)$. When $p=0$, since $[V, I] = 0$, $Q \mathcal{P}_0 = 0$, the bound holds. Now we focus on $p > 0$.

We will first bound $\|Q \mathcal{P}_p \|_{\onetoone}$.
We fix a $S\subseteq \Omega$ where $|S| \le p$ and consider $\sum_{S' \subseteq \Omega} |(Q \calP_p)_{S'S} | $. Let $V = \sum_{i,j,k,l\in\Omega} v_{ijkl} \gamma_{i}\gamma_j\gamma_k\gamma_l$. Here and below, the sum is over pairwise distinct indices $i,j,k,l$ since $V$ is quartic. First, consider $|S|=p$. 
\begin{equation*}
    \begin{aligned}
        & \sum_{S' \subseteq \Omega } |(Q\calP_p)_{S'S} |
        = \sum_{S' \subseteq \Omega } |(Q)_{S'S} |
        = \sum_{S' \subseteq \Omega} | \braket{ \gamma_{S'} , [V, \gamma_S] } | \\
        = & \sum_{S' \subseteq \Omega} \sum_{i,j,k,l\in \Omega} |v_{ijkl}| \cdot | \braket{\gamma_{S'}, [\gamma_i\gamma_j\gamma_k\gamma_l, \gamma_S]} | \\
        \le & \sum_{S' \subseteq \Omega} \sum_{i,j,k,l\in \Omega} 2 \cdot \istrue \{
            S' = \{i,j,k,l\} \Delta S \text{ AND } S \cap \{i,j,k,l\} \neq \emptyset \text{ AND } v_{ijkl}\neq 0
        \} \\
        = & \sum_{i,j,k,l\in \Omega} 2 \cdot \istrue \{
            S \cap \{i,j,k,l\} \neq \emptyset \text{ AND } v_{ijkl}\neq 0
        \} \\
        \le & 2p\frakd,
    \end{aligned}
\end{equation*}
The last inequality is given by bounded-degree locality: To count the number of ordered tuples $(i,j,k,l)$ such that $\{i,j,k,l\}\cap S \neq \emptyset$ and $v_{ijkl}\neq 0$, since at least one mode in $S$ needs to be in the $S\cap \{i,j,k,l\}$, we first pick that mode from $S$, of which there are $p$ choices. Then we count how many ordered tuples $(i,j,k,l)$ such that $\{i,j,k,l\}$ intersects $S$ at that mode, of which there are at most $\frakd$ choices.
Therefore, $\sum_{i,j,k,l\in \Omega} \istrue \{S \cap \{i,j,k,l\} \neq \emptyset \text{ AND } v_{ijkl}\neq 0\} \le p\frakd$, and $\sum_{S' \subseteq \Omega } |(Q)_{S'S} | \le 2p\frakd$.

When $|S| < p$, similarly  $\sum_{S' \subseteq \Omega } |(Q)_{S'S} | \le 2 |S| \frakd < 2p \frakd$. Since $\|Q \mathcal{P}_p \|_{\onetoone}$ is the maximum absolute column sum, we have $\|Q \mathcal{P}_p \|_{\onetoone} \le 2p\frakd$.

We next consider $\|Q \mathcal{P}_p \|_{\inftoinf}$, which is the maximum absolute row sum.
We fix a $S'\subseteq \Omega$ where $|S'| \le p+2$, because $Q \mathcal{P}_p$ can only increase or decrease the degree by $2$. First, suppose that $|S'| \le p$. Since $Q$ is skew-symmetric,
\[
    \sum_{S \subseteq \Omega} | (Q\calP_p)_{S'S}| \le
    \sum_{S \subseteq \Omega} | (Q)_{S'S}| = \sum_{S \subseteq \Omega} | (Q)_{SS'}| \le 2p\frakd.
\]

Then, suppose that $p < |S'| \le p+2$. Since $|S| \le p $ and $|S'| > p$, $Q$ must increase the degree by $2$. Therefore,
\[
    \sum_{S \subseteq \Omega} | (Q\calP_p)_{S'S}| =
    \sum_{S \subseteq \Omega: |S| = |S'|-2} | (Q)_{S'S}|.
\]
Then,
\begin{equation*}
    \begin{aligned}
        & \sum_{S \subseteq \Omega: |S| = |S'|-2} | (Q)_{S'S} | = \sum_{S \subseteq \Omega: |S| = |S'|-2} | \braket{\gamma_{S'}, [V, \gamma_S]} | \\
        = & \sum_{S \subseteq \Omega: |S| = |S'|-2} \: \sum_{i,j,k,l \in \Omega} |v_{ijkl}| \cdot | \braket{\gamma_{S'}, [\gamma_i\gamma_j\gamma_k\gamma_l, \gamma_S]} | \\
        \le & \sum_{S \subseteq \Omega: |S| = |S'|-2} \: \sum_{i,j,k,l\in \Omega} 2 \cdot \istrue \{
            S' = \{i,j,k,l\} \Delta S \text{ AND } |S' \cap \{i,j,k,l\}|= 3 \text{ AND } v_{ijkl}\neq 0
        \} \\
        = & \sum_{i,j,k,l \in \Omega} 2 \cdot\istrue\{ |S' \cap \{i,j,k,l\}|=3 \text{ AND } v_{ijkl}\neq 0\}
    \end{aligned}
\end{equation*}

We show below that $\sum_{i,j,k,l\in \Omega: v_{ijkl}\neq 0} \istrue\{ |S' \cap \{i,j,k,l\}| = k \} \le \frac{1}{k}\frakd |S'|$.

Consider an undirected bipartite graph $(L,E)$ where $L = M \cup N$ is the set of all vertices and $E$ is the set of all edges. $M, N$ are two disjoint subsets of vertices. Let $|M| = |S'|$ and each vertex in $M$ represents a Majorana mode in $S'$.
Let $N$ represent all the terms in the interaction Hamiltonian $V$, where each vertex represents a term. When a term in $V$ involves a Majorana mode in $S'$, we connect an edge from the vertex in $N$ corresponding to that term, to the vertex in $M$ representing that Majorana mode. This forms an interaction graph. Let $N_k = \{ w \in N: \deg(w) = k \}$, and we have $$\sum_{i,j,k,l \in \Omega: v_{ijkl}\neq 0} \istrue\{ |S' \cap \{i,j,k,l\}| = k \} = |N_k|.$$ Since every vertex in $N_k$ has degree $k$, we have $|N_k|k\leq |E|$.
On the other hand, because the degree of every vertex in $M$ is at most $\frakd$, the number of edges going out from $M$ is at most $|S'|\frakd$.
Therefore $|E| \le |S'|\frakd$. Combining the two inequalities involving $|E|$, we obtain $|N_k| \le \frac{1}{k} |S'| \frakd$.

Therefore when $|S'| \le p$, $\sum_{S \subseteq \Omega} | (Q\calP_p)_{S'S}| \le 2p\frakd$, and when $ p < |S'| \le p+2$, $\sum_{S \subseteq \Omega} | (Q\calP_p)_{S'S}| \le 2/3 |S'| \frakd \le \frac{2(p+2)}{3} \frakd$. When $p\ge 1$, we have $\frac{2(p+2)}{3} \le 2p$. Therefore, taking the maximum, $\|Q\calP_p\|_\inftoinf \le 2p\frakd$.

Lastly, for the operator norm:
\[
\|Q \mathcal{P}_p\| \le \sqrt{\|Q \mathcal{P}_p\|_\onetoone \|Q \mathcal{P}_p\|_\inftoinf}   \le 2p\frakd.
\]

\end{proof}

\begin{proof}[Proof of Theorem \ref{thm:Fnorm_convergence_intro}]
    For the integrand in $\widetilde{A}^{(k)}(t)$ defined in \eqref{eq:kth_order}, we have
    \begin{equation}
    \begin{aligned}
        & \Big\|\Big[ \tau^0_{-t_k} (V), \big[\tau^0_{-t_{k-1}}(V), \cdots, [\tau^0_{-t_1}(V), A]\big] \Big] \Big\|_F \\
        & =  \| T_{-t_k} Q \mathcal{P}_{2k-2+d_0} T_{t_k} \cdots T_{-t_1} Q \mathcal{P}_{d_0} T_{t_1} \vecrm(A) \| \\
        & \le  \left( \prod_{j=0}^{k-1} \|Q \mathcal{P}_{2j + d_0}\| \right) \|\vecrm(A)\|
        \le \left( \prod_{j=0}^{k-1} \left( 2 (2j+d_0) \frakd \right) \right) \|A\|_F \\
        & \le 2^k (2k-2 + d_0)^k \frakd^k \|A\|_F \le 2^k\frakd^k k! e^{2k+d_0-2} \|A\|_F \le (2 e^2 \frakd)^k k! e^{d_0-2} \|A\|_F,
    \end{aligned}
\end{equation}
where we used that $x^k\le k!e^x$ for $x\ge 0$ and $\|T_{\pm t}\|=1$ for all $t$.
Therefore,
\begin{equation}\label{eq:bound_aikf}
    \begin{aligned}
        \| \widetilde{A}^{(k)}(t) \|_F & =  \left\| (i\lambda)^k \int_0^{t} \dd t_k \int_0^{t_{k}} \dd t_{k-1} \cdots \int_0^{t_{2}} \dd t_1 \, \Big[ \tau^0_{-t_k} (V), \big[\tau^0_{-t_{k-1}}(V), \cdots, [\tau^0_{-t_1}(V), A]\big] \Big]  \right\|_F\\
        & \le \frac{\lambda^k|t|^k}{k!} (2 e^2 \frakd)^k e^{d_0-2} k!  \|A\|_F
        \le (2 e^2\lambda |t| \frakd)^k e^{d_0-2} \|A\|_F .
    \end{aligned}
\end{equation}
\end{proof}

To evaluate the time-ordered integral \eqref{eq:kth_order} of Majorana polynomials over $(t_1, t_2, \cdots, t_k)$ using Monte Carlo methods, we repeat the following procedure for a sufficient number of times: uniformly sample  $(t_1, t_2, \cdots, t_k)$ from the time-ordered simplex $\left\{(t_1,\ldots,t_k): 0\leq t_1\leq t_2\leq\cdots\leq t_k\leq t\right\}$, evaluate the integrand as a Majorana polynomial, multiply by the weight $(i\lambda)^kt^k/k!$, and retain the result as one sample.
We then average the samples.
At fixed times $(t_1, t_2, \cdots, t_k)$,
the integrand can be evaluated as a polynomial of Majorana operators.
Since there are $\Or(N^{k})$ Majorana monomials of degree $\Or(k)$, an $N^{\Or(k)}$-time algorithm is readily available through linear algebra operations within the subspace spanned by Majorana monomials up to degree $\Or(k)$.

\begin{proof}[Proof of Corollary \ref{cor:quasi-polynomial-time}]
    When $|t| < 0.99/(2e^2 \lambda \frakd)$, by Theorem \ref{thm:Fnorm_convergence_intro}, we truncate series \eqref{eq:series_expansion} at order $K = \Or(\log(e^{d_0 } \|A\|_F/\epsilon))$, controlling the truncation error to $\epsilon$.
    We then use the Monte Carlo method to evaluate each order.

    Now we fix an order $k\in [K]$.  We sample $k$ independent random variables
uniformly from $[0,t]$ and sort them to obtain
\begin{equation}
    0\le t_1\le t_2\le \cdots \le t_k\le t\ .
\end{equation}
 The resulting tuple $(t_1,\ldots,t_k)$ is uniformly distributed over the
time-ordered simplex, whose volume is $t^k/k!$. For each such tuple, define the Majorana polynomial
\begin{equation}
    S=(i\lambda)^k\frac{t^k}{k!}
\Big[ \tau^0_{-t_k}(V),
\big[\tau^0_{-t_{k-1}}(V),\ldots,
[\tau^0_{-t_1}(V),A]\big]\Big]\ .
\end{equation}
We repeat this procedure independently $m$ times and denote the resulting
samples by $S_1,S_2,\ldots,S_m$. Each $S_j$ is a Majorana polynomial of
degree $\Or(k)$.

By the bound used in the proof of Theorem
\ref{thm:Fnorm_convergence_intro},
\begin{equation}
    \|\widetilde A^{(k)}(t)\|_F\le
\alpha^k C\|A\|_F,
\qquad
\|S_j\|_F\le
\alpha^k C\|A\|_F\ ,
\end{equation}
for all $j\in[m]$, where
$\alpha=2e^2\lambda\frakd|t|<0.99$ and
$C=e^{d_0-2}$.
Since the samples are independent and unbiased,
\begin{equation}
    \E S_j=\widetilde A^{(k)}(t)\ .
\end{equation}

    Furthermore, 
    \begin{equation}
        \E \left\| \frac{1}{m } \sum_{i=1}^m S_i - \tilde{A}^{(k)}(t)  \right\|_F^2 =  \frac{1}{m} \E \left\| S_i - \tilde{A}^{(k)}(t) \right\|_F^2 \le \frac{1}{m} \left( \alpha^k \|A\|_FC \right)^2.
    \end{equation}
    The random variable $Z:= \left\| \frac{1}{m } \sum_{i=1}^m S_i - \tilde{A}^{(k)}(t)  \right\|_F$  quantifies the error. The above bound implies that 
    \begin{equation}
    \E Z    \le \sqrt{\E Z^2}\le \frac{1}{\sqrt{m}}\alpha^k \|A\|_FC\ .
     \end{equation}
    
    Since changing one $S_i$ causes $Z$ to change by at most
$\frac{2}{m}\alpha^k\|A\|_F C$, McDiarmid's inequality gives
\begin{equation}
    \PP(Z-\E Z\ge u)
    \le
    \exp\left(
    -\frac{m u^2}
    {2(\alpha^k\|A\|_F C)^2}
    \right).
\end{equation}
    
    Therefore, by McDiarmid's inequality,
\begin{equation}
    \PP\left(
        Z > \E Z + \frac{\epsilon}{K}
    \right)
    \le
    \exp\left(
        -\frac{m\epsilon^2}
        {2K^2\left(\alpha^k\|A\|_F C\right)^2}
    \right).
\end{equation}
Since there are $K$ orders, we require the concentration contribution for each order to be at most $\epsilon/K$ with probability at least $1-\delta/K$. Hence, to make this probability at most $\delta/K$, it suffices to choose
\begin{equation} \label{eq:sample_number}
    m =
    \Or\left(
        \frac{K^2\left(\alpha^k\|A\|_F C\right)^2}{\epsilon^2}
        \log\frac{K}{\delta}
    \right).
\end{equation}
We also require $\E Z < \epsilon/K$. Since
\begin{equation}
    \E Z
    \le
    \frac{1}{\sqrt{m}}\alpha^k\|A\|_F C,
\end{equation}
it suffices to have
\begin{equation}
    m >
    \frac{K^2\left(\alpha^k\|A\|_F C\right)^2}{\epsilon^2},
\end{equation}
which is already satisfied by the scaling in
\eqref{eq:sample_number}.
    Now our total error is $3\epsilon$, one $\epsilon$ due to the truncation of the series \eqref{eq:series_expansion}, one $\epsilon$ due to the concentration error per each order, and one $\epsilon
    $ due to the expectation error $\E Z$ per each order. The total runtime is
    \begin{equation}
        \Or\left( \frac{K^2 \left(\|A\|_FC\right)^2}{\epsilon^2} \log \frac{K}{\delta}  \cdot N^{\Or(K)} \right).
    \end{equation}
    Absorbing the $\frac{1}{\epsilon^2}$ factor into $N^{\Or(K)}$ gives the runtime in the corollary.

\end{proof}

\subsection{Convergence in Majorana $1$-norm with geometric locality}
\label{sec:convergence_in_one_norm}

The previous section establishes exponential convergence of the series \eqref{eq:series_expansion} in the Frobenius norm under bounded-degree locality condition and gives a quasi-polynomial-time algorithm. In this section, we establish exponential convergence of \eqref{eq:series_expansion} in the Majorana $1$-norm, which upper bounds the operator norm, under geometric locality condition. More precisely, the fermionic modes are arranged on a $D$-dimensional lattice where $D$ is a constant, and they interact locally: there is an interaction range $r_0$ beyond which no two modes  interact. In particular, we consider a distance metric $d(\cdot,\cdot)$ to represent the distance between pairs of Majorana modes. If a pair of Majorana modes have distance zero, we say they are on the same site of the lattice.

\begin{definition}[Geometric locality] \label{def:geo_local}
    Given a metric $d(i,j)$ for Majorana modes $i, j\in [2N]$. A Hamiltonian $H = \sum_k H_k$ is $(r_0, \frakd)$-geometrically local if there is no Hamiltonian term $H_k$ involving $i$ and $j$ simultaneously for $d(i, j) > r_0$. Furthermore, for any single mode, the number of terms in $H$ involving it is bounded by a constant $\frakd$.
\end{definition}
Note that for lattice Hamiltonians, the fact that the number of terms in $H$ involving any single mode is bounded by $\frakd$ is a consequence of finite interaction range and finite dimension, and the upper bound of $\frakd$ can be determined by the interaction range $r_0$ and the dimension $D$. Still, we adopt a separate $\frakd$ parameter here for convenience.

We consider the induced matrix representation of $[ \tau^0_{-t} (V), \,\cdot\,]$, written as $T_{-t} Q T_t$, as defined in \eqref{eq:matriX_rep}, where $T_{\pm t}$ is the induced matrix representation of $\tau^0_{\pm t} (\cdot)$, and $Q$ is the induced matrix representation of $[V, \cdot]$.  Our main goal is to bound $\|T_{-t}QT_t\calP_p\|_\onetoone$. We first prove a lemma that allows us to restrict attention to degree-$1$ Majorana polynomials, i.e. $\|T_{-t}QT_t\calP_1\|_\onetoone$. Since $T_{-t} Q T_t \calP_1 = T_{-t} \calP_3 Q \calP_1 T_t \calP_1$, we will then bound $\|T_{-t} \calP_3\|_\onetoone, \|Q\calP_1\|_\onetoone$ and $\|T_t\calP_1\|_\onetoone$, where $\|Q\calP_1\|_\onetoone$ already follows from Lemma \ref{lem:commutator_matrix_norm}. Combining these together yields the proof of Theorem \ref{thm:convergence_maj_one}.

\begin{lemma}\label{lem:reduction2deg_one}
    \(
    \|T_{-t} Q T_{t} \calP_p \|_\onetoone
    \le p \|T_{-t} Q T_{t} \calP_1 \|_\onetoone.
    \)
\end{lemma}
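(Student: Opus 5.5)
The plan is to go back from the matrix picture to the operator picture and use a Leibniz rule for commutators. Since $\|\cdot\|_\onetoone$ is the maximum absolute column sum, and the column of $T_{-t}QT_t\calP_p$ indexed by $S$ (with $|S|\le p$) is $\vecrm(T_{-t}QT_t\,\vecrm(\gamma_S))$, it suffices to show
\[
\big\|[\tau^0_{-t}(V),\gamma_S]\big\|_\majone \le |S|\,\max_{i\in[2N]}\big\|[\tau^0_{-t}(V),\gamma_i]\big\|_\majone
\]
for every $S$ with $|S|\le p$. Here I use \eqref{eq:vec_comm}, which identifies $T_{-t}QT_t$ with the matrix of $[\tau^0_{-t}(V),\cdot\,]$. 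The right-hand side is at most $p\,\|T_{-t}QT_t\calP_1\|_\onetoone$, because each $\gamma_i$ is a degree-$1$ basis vector. Columns with $|S|>p$ vanish because of $\calP_p$, and the case $S=\emptyset$ is trivial since $[X,I]=0$.

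First I will write $\gamma_S=\frakI(\gamma_S)\,\gamma_{i_1}\cdots\gamma_{i_m}$ with $m=|S|$ and $|\frakI(\gamma_S)|=1$. I set $X=\tau^0_{-t}(V)$. Expanding by the Leibniz rule for commutators gives
\[
[X,\gamma_S]=\frakI(\gamma_S)\sum_{j=1}^m \gamma_{i_1}\cdots\gamma_{i_{j-1}}\,[X,\gamma_{i_j}]\,\gamma_{i_{j+1}}\cdots\gamma_{i_m}.
\]
This is pure algebra and needs no parity assumptions. By the triangle inequality for the $\ell_1$ norm of coefficient vectors, it then remains to show that left and right multiplication by a product of Majorana operators does not change the Majorana $1$-norm.

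This invariance is the one step that requires an actual argument, though it is elementary. For a fixed product $P=\gamma_{i_1}\cdots\gamma_{i_{j-1}}$, the anticommutation relations give $P\gamma_{T}=c_{T}\,\gamma_{U\Delta T}$, where $U$ is the set of indices appearing an odd number of times in $P$ and $|c_T|=1$. The map $T\mapsto U\Delta T$ is a bijection on subsets of $[2N]$. So for $Y=\sum_T y_T\gamma_T$, the operator $PY$ has coefficient vector equal to a permutation of $(y_T)_T$ with each entry multiplied by a unimodular phase, and hence $\|PY\|_\majone=\|Y\|_\majone$. The same argument applies to right multiplication. Applying this to each summand gives $\|[X,\gamma_S]\|_\majone\le\sum_{j=1}^m\|[X,\gamma_{i_j}]\|_\majone$, which yields the desired bound with $m\le p$.

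I do not expect a real obstacle. The only point needing care is the phase bookkeeping, namely checking that $\frakI(\gamma_S)$ and the factors $c_T$ all have modulus one, so that the triangle inequality is applied to genuine isometries of the $\ell_1$ space.
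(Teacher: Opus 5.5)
Your proof is correct and follows the same route as the paper. The paper likewise writes $\|T_{-t}QT_t\calP_p\|_\onetoone$ as the supremum over $|S|\le p$ of $\|[\tau^0_{-t}(V),\gamma_S]\|_\majone$, expands by the Leibniz rule \eqref{eq:commutator_expansion}, and applies the triangle inequality. Your explicit check that multiplying by a product of Majorana operators permutes the coefficient vector up to unimodular phases, and so preserves $\|\cdot\|_\majone$, supplies a step the paper uses without comment.
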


\begin{proof}
    By definition
    \begin{equation}
        \|T_{-t} Q T_{t} \calP_p \|_\onetoone = \sup_{\gamma_S: |S| \le p }
        \left\| \left[ \tau^0_{-t}(V), \gamma_S \right] \right\|_\majone.
    \end{equation}
Note that we can restrict the supremum to $|S|>0$ rather than $|S|\ge 0$, because when $|S|=0$, i.e., $\gamma_{\emptyset} = I$, the commutator vanishes.

Consider a Majorana monomial $\nu = \frakI (\nu)\, \nu_{1} \nu_{2} \cdots \nu_{J}$ where $J = \deg(\nu)$. Observe that
\begin{equation} \label{eq:commutator_expansion}
    \begin{aligned}
        [\tau^0_{-t} V, \nu] & = \frakI (\nu) \sum_{j=1}^{J} \Big( \prod_{m=1}^{j-1} \nu_m \Big)  \,  [ \tau^0_{-t} V, \nu_{j} ] \, \Big( \prod_{m'=j+1}^{J} \nu_{m'} \Big) , \\
        & = \frakI (\nu) \sum_{j=1}^{J} \Big( \prod_{m=1}^{j-1} \nu_m \Big)  \,  \tau^0_{-t} [  V, \tau^0_{t}(\nu_{j}) ] \, \Big( \prod_{m'=j+1}^{J} \nu_{m'} \Big).
    \end{aligned}
\end{equation}
Then,
\begin{equation}\label{eq:one_norm_iter}
        \| [\tau^0_{-t} V, \nu] \|_{\majone} \le   \deg(\nu) \cdot \sup _{j\in [2N] } \big\| \big[ \tau^0_{-t} [  V, \tau^0_{t}(\gamma_{j}) ] \big] \big\|_{\majone}
\end{equation}

The second factor on the right-hand side is precisely the definition of $\|T_{-t} Q T_{t} \calP_1 \|_\onetoone$.
We therefore have
\begin{equation}
    \sup_{\gamma_S: |S| \le p }
        \left\| \left[ \tau^0_{-t}(V), \gamma_S \right] \right\|_\majone \le p \sup_{\gamma_S: |S| \le 1 }
        \left\| \left[ \tau^0_{-t}(V), \gamma_S \right] \right\|_\majone,
\end{equation} which proves the Lemma.
\end{proof}

We note that $T_t \calP_1$ relates to a matrix representation of the form $e^{i\htilde t}$:
\begin{subequations} \label{eq:htilde_def_all}

\begin{equation}
    \label{eq:htilde_relation_id}
    \left(T_t \calP_1\right) \vecrm (I) = \vecrm(I),
\end{equation}

\begin{equation} \label{eq:htilde_relation_deg1}
    \begin{aligned}
        \left(T_t \calP_1\right) \vecrm (\gamma_j)
        & = \vecrm \left( e^{iH^0 t } \gamma_j e^{-iH^0 t}  \right) \\
        & =  \sum_i \braket{ \gamma_i, e^{iH^0 t } \gamma_j e^{-iH^0 t}  } \vecrm(\gamma_i) \\
        & = \sum_i \left( e^{i\htilde t} \right)_{ij} \vecrm(\gamma_i),
    \end{aligned}
\end{equation} with
\begin{equation} \label{eq:htilde_def}
    \htilde \in \mathbb{C}^{2N\times 2N}, (\htilde)_{ij} = \braket{ \gamma_i,  [H^0, \gamma_j]} = 2h_{ij} - 2 h_{ji} = 4h_{ij},
\end{equation} where the $h_{ij}$'s are the Hamiltonian coefficients as in \eqref{eq:interacting_fermions_intro}.
\end{subequations}
$\htilde$ is Hermitian, and is $\frakd/2$-sparse in both its rows and columns (the $1/2$ factor arises because two symmetric terms $\gamma_i\gamma_j$ and $\gamma_j \gamma_i$ in $H^0$ give rise to the same $\htilde_{ij}$, even though they are counted as two distinct terms when calculating $\frakd$). Consequently, $\|\htilde\|_\onetoone \le 4 \cdot \frakd/2 = 2 \frakd$. Moreover, $\htilde_{ij} = 0$ if $d(i,j) > r_0$, which helps us control the operator spreading generated by $H^0$.

We now consider the $\onetoone$ norm of $T_t\calP_1$. By \eqref{eq:htilde_relation_id} and \eqref{eq:htilde_relation_deg1}, $\|T_t \calP_1 \|_\onetoone = \max\{ 1, \|e^{i\htilde t} \|_\onetoone \}$.
The $\onetoone$ norm is the maximum $\ell_1$ norm of a column. Because every column of $e^{i\htilde t}$ has Euclidean norm $1$, its $\ell_1$ norm is at least $1$, and hence $\|e^{i\htilde t}\|_{\onetoone}\geq 1$. Therefore
\begin{equation}
    \|T_t \mathcal{P}_1\|_\onetoone = \max\{ 1, \|e^{i\htilde t} \|_\onetoone \} = \|e^{i\htilde t} \|_\onetoone.
\end{equation}

\begin{lemma}\label{lem:eiht_bnd}
    Consider the quadratic Hamiltonian $H^0 = \sum_{i,j=1}^{2N} h_{ij} \gamma_i \gamma_j $, where $h$ is a purely imaginary antisymmetric matrix whose entries are bounded by $1$ in absolute value. Let the Majorana modes be arranged on a $D$-dimensional lattice, and let $H^0$ be $(r_0, \frakd)$-geometrically local. Let $T_t$ be the matrix representation of the time-evolution $\tau^0_t = e^{iH^0 t} (\cdot) e^{-iH^0 t}$ given in \eqref{eq:Tt_def} and $\htilde$ given in \eqref{eq:htilde_def}.
    Then, we have
    \begin{equation*}
        \left\| T_{t} \calP_1 \right\|_\onetoone = \left\| e^{i\htilde t} \right\|_\onetoone \le C_D\lceil 4e \frakd |t| r_0 \rceil^{D/2}  + 1,  \quad \forall t \in \mathbb{R},
    \end{equation*}
    where $C_D>0$ is a constant that depends only on $D$. In particular, $C_D$ is chosen such that the number of modes in a sphere of radius $l>0$ is upper bounded by $C_D^2 \lceil l\rceil^{D}$.

\end{lemma}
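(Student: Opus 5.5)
We bound each column of $U := e^{i\htilde t}$ separately. Fix a column index $j$ and split its $\ell_1$ norm into an inner part and an outer part. The inner part collects the modes $i$ with $d(i,j)\le R$; the outer part collects the rest. The radius $R$ will be chosen proportional to $\frakd |t| r_0$.

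For the inner part we use Cauchy--Schwarz together with unitarity of $U$:
\[
\sum_{i:\, d(i,j)\le R} |U_{ij}| \le \sqrt{\#\{i: d(i,j)\le R\}}\,\Big(\sum_i |U_{ij}|^2\Big)^{1/2} \le C_D \lceil R\rceil^{D/2}.
\]
This uses the defining property of $C_D$, namely that a ball of radius $R$ contains at most $C_D^2\lceil R\rceil^D$ modes. This step is where the square root $D/2$ in the exponent comes from. The unitarity is essential here: a naive $\ell_1$ bound on the inner part would give $R^D$ instead.

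For the outer part we prove a Lieb--Robinson-type tail bound by expanding $U=\sum_{n\ge 0}(it)^n\htilde^n/n!$. Since $\htilde_{ik}=0$ whenever $d(i,k)>r_0$, the entry $(\htilde^n)_{ij}$ vanishes unless $n r_0\ge d(i,j)$. Therefore only the terms with $n\ge n_0:=\lceil R/r_0\rceil$ contribute to the outer part. Using the $\ell_1$ column bound $\|\htilde\|_\onetoone\le 2\frakd$, we get
\[
\sum_{i:\, d(i,j)>R}|U_{ij}| \le \sum_{n\ge n_0}\frac{(2\frakd|t|)^n}{n!} \le \sum_{n\ge n_0}\Big(\frac{2e\frakd|t|}{n}\Big)^n,
\]
where the last step uses $n!\ge (n/e)^n$. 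Now take $n_0\ge 4e\frakd|t|$, which holds for instance with $R = r_0\lceil 4e\frakd|t|\rceil$. Then every ratio $2e\frakd|t|/n$ is at most $1/2$, so the tail is at most $\sum_{n\ge 1}2^{-n}=1$. This gives the additive $+1$ in the statement.

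The remaining work is bookkeeping so that the radius matches the stated $\lceil 4e\frakd|t|r_0\rceil$. One option is to set $R:=\lceil 4e\frakd |t| r_0\rceil$ and check that $n_0=\lceil R/r_0\rceil\ge 4e\frakd|t|$, which holds because $R\ge 4e\frakd|t|r_0$. Combining the inner and outer bounds then gives $\|U\|_\onetoone\le C_D\lceil 4e\frakd|t|r_0\rceil^{D/2}+1$.

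Two edge cases need a brief remark. If $r_0<1$, this choice is still valid, since the ball bound only involves $\lceil R\rceil$. If $t=0$, the claim is trivial.

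The equality $\|T_t\calP_1\|_\onetoone=\|U\|_\onetoone$ was already established just before the lemma. The main subtlety is making sure the Cauchy--Schwarz step uses column normalization, i.e. that $U$ is unitary. This holds because $\htilde$ is Hermitian, as noted after \eqref{eq:htilde_def}.
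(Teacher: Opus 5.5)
Your proposal is correct and follows the paper's proof essentially step for step. Both split each column of $e^{i\htilde t}$ at radius $\lceil 4e\frakd|t|r_0\rceil$. Both bound the inner ball by Cauchy--Schwarz with column normalization, and the outside by the Taylor-series tail using finite range, $\|\htilde\|_{\onetoone}\le 2\frakd$, and $n!\ge (n/e)^n$. Your closed rather than open inner ball and your explicit treatment of $t=0$ are harmless bookkeeping differences.
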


\begin{proof}
    Consider two modes $\alpha, \beta \in [2N]$. Let the bra-ket notation denote the corresponding basis for the matrix $\htilde$, e.g., $\bra{\beta} \htilde \ket{\alpha} = (\htilde)_{\beta\alpha}$.
    By Taylor expansion, we have
    \begin{equation}
        \braket{\beta| e^{i\htilde t} |\alpha} =
        \sum_{p=0}^{\infty} \braket{\beta| (i\htilde)^p |\alpha} \frac{t^p}{p!}.
    \end{equation}
    Consequently,
    \begin{equation}
        \sum_{\beta: d(\beta,\alpha) \ge l} |\braket{\beta| e^{i\htilde t} |\alpha}| \le
        \sum_{p=0}^{\infty} \sum_{\beta: d(\beta,\alpha) \ge l} |\braket{\beta| (i\htilde)^p |\alpha}| \frac{|t|^p}{p!}.
    \end{equation}
    Observe first that for $d(\beta,\alpha) \ge l$, $|\braket{\beta| (i\htilde)^p |\alpha}| > 0$ only when $p\ge \lceil l/r_0 \rceil$. Second, since $\|\htilde\|_\onetoone \le 2 \frakd$, we have
    \begin{equation}
        \sum_{\beta: d(\beta,\alpha) \ge l} |\braket{\beta| (i\htilde)^p |\alpha}| \le \sum_{\beta} |\braket{\beta| (i\htilde)^p |\alpha}| \le \|\htilde\|_\onetoone^p \le (2\frakd)^p.
    \end{equation}
    It follows that
    \begin{equation}
        \sum_{\beta: d(\beta,\alpha) \ge l} |\braket{\beta| e^{i\htilde t} |\alpha}| \le
        \sum_{p= \lceil l/r_0 \rceil}^{\infty} (2\frakd)^p \frac{|t|^p}{p!} \le \sum_{p=\lceil l/r_0 \rceil}^\infty \left( \frac{2e\frakd |t|}{p} \right)^p \le 1\ ,
    \end{equation}
    where in the last inequality  we have assumed $l \ge 4e\frakd |t| r_0 $.

    On the other hand, $\sum_{\beta: d(\beta,\alpha) < l} |\braket{\beta| e^{i\htilde t} |\alpha} |^2 \le 1$. Therefore,
    \begin{equation}
        \sum_{\beta: d(\beta,\alpha) < l} |\braket{\beta| e^{i\htilde t} |\alpha} | \le \sqrt{|\{\beta: d(\beta,\alpha) < l\}|} \leq  C_D {\lceil l\rceil ^{D/2}},
    \end{equation}
    where $C_D>0$ is a $D$-dependent constant that satisfies $|\{\beta: d(\beta,\alpha) < l\}|\leq C_D^2 \lceil l \rceil^D$. It exists because the modes are arranged on a $D$-dimensional lattice.
    Combining the above, we have
    \begin{equation}
        \sum_{\beta: \beta \in \Gamma} |\braket{\beta| e^{i\htilde t} |\alpha} | \le C_D \lceil 4e \frakd |t| r_0 \rceil ^{D/2}  + 1.
    \end{equation}
\end{proof}

\begin{lemma} \label{lem:1norm_submulplicativity}
    Let $T_t$ be the same as in Lemma \ref{lem:eiht_bnd}. Then,
    \(
        \left\| T_{t} \calP_p \right\|_\onetoone \le \left\| T_t \calP_1 \right\|_\onetoone^p.
    \)
\end{lemma}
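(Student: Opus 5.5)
The bound follows from the fact that free-fermion evolution is an algebra automorphism: it acts on a Majorana monomial factor by factor, so the $\onetoone$ norm of $T_t\calP_p$ reduces to products of single-Majorana evolutions. Since $\|T_t\calP_p\|_\onetoone$ is the maximum column $\ell_1$ norm over columns indexed by $S$ with $|S|\le p$, it suffices to show
\[
\|\tau^0_t(\gamma_S)\|_\majone \le \|T_t\calP_1\|_\onetoone^{|S|} \le \|T_t\calP_1\|_\onetoone^{p}
\]
for every monomial with $|S|\le p$. The second inequality uses $\|T_t\calP_1\|_\onetoone=\|e^{i\htilde t}\|_\onetoone\ge 1$, which was shown just before Lemma~\ref{lem:eiht_bnd}. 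The case $S=\emptyset$ is trivial because $\tau^0_t(I)=I$.

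For $S=\{s_1<\dots<s_J\}$, write $\gamma_S=\frakI(\gamma_S)\,\gamma_{s_1}\cdots\gamma_{s_J}$. Since $\tau^0_t$ is conjugation by a unitary, it is multiplicative, so
\[
\tau^0_t(\gamma_S)=\frakI(\gamma_S)\prod_{m=1}^J \tau^0_t(\gamma_{s_m}),
\]
with $|\frakI(\gamma_S)|=1$. By \eqref{eq:htilde_relation_deg1}, each factor is a linear combination $\sum_i (e^{i\htilde t})_{i s_m}\gamma_i$ whose Majorana $1$-norm is at most $\|e^{i\htilde t}\|_\onetoone=\|T_t\calP_1\|_\onetoone$.

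The key step is submultiplicativity of the Majorana $1$-norm: $\|XY\|_\majone\le\|X\|_\majone\|Y\|_\majone$. To prove it, expand $X=\sum_S x_S\gamma_S$ and $Y=\sum_T y_T\gamma_T$. The product $\gamma_S\gamma_T$ equals a unit-modulus phase times $\gamma_{S\Delta T}$, because the anticommutation relations and $\gamma_i^2=1$ only reorder factors and cancel repeated ones. The triangle inequality then gives $\|XY\|_\majone\le\sum_{S,T}|x_S||y_T|$. Applying this inductively to the $J$ factors yields $\|\tau^0_t(\gamma_S)\|_\majone\le\|T_t\calP_1\|_\onetoone^{J}$, and taking the maximum over columns with $|S|\le p$ finishes the proof.

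No step is a serious obstacle. The only point needing care is the phase bookkeeping in the submultiplicativity claim, namely that a product of monomials is a single monomial up to a phase of modulus one. This is elementary and is also implicit in Proposition~\ref{prop:monomial_commutator}.
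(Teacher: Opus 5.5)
Your proposal is correct and follows essentially the same route as the paper: multiplicativity of $\tau^0_t$ on the factors of a monomial, submultiplicativity of the Majorana $1$-norm (a product of monomials is a unit-phase monomial), and $\|T_t\calP_1\|_\onetoone\ge 1$ to pass from $|S|$ to $p$. Your explicit reduction to columns with $|S|\le p$ and your separate treatment of $S=\emptyset$ add a little detail that the paper leaves implicit.
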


\begin{proof}

    Observe that for any Majorana monomial $\nu = \frakI(\nu)\nu_1 \nu_2 \cdots \nu_J$,
    \begin{equation}
        \tau^0_t(\nu_1 \nu_2 \cdots \nu_J) = \tau^0_t(\nu_1) \tau^0_t(\nu_2) \cdots \tau^0_t(\nu_J).
    \end{equation} The Majorana $1$-norm is submultiplicative, i.e. $\|XY\|_\majone \le \|X\|_\majone \|Y\|_\majone$, since if $X = \sum_i a_i \gamma_i, Y= \sum_j b_j \gamma_j$, we have
    \begin{equation}
        \left\| \sum_i a_i \gamma_i \sum_j b_j \gamma_j \right\|_\majone =  \left\| \sum_{i, j} a_ib_j \gamma_i \gamma_j \right\|_\majone \le \sum_{i,j} |a_i b_j| = \sum_i |a_i| \sum_j |b_j|.
    \end{equation}
    Therefore, we obtain
    \begin{equation}
        \| \tau^0_t(\nu_1 \nu_2 \cdots \nu_J)\|_\majone \le \| \tau^0_t(\nu_1)\|_\majone \| \tau^0_t(\nu_2)\|_\majone \cdots \|\tau^0_t(\nu_J)\|_\majone \le \|T_t \calP_1\|_\onetoone^J .
    \end{equation}
    For $T_t\calP_p$, we consider $J\le p$. Since $\|T_t \calP_1\|_\onetoone \ge 1$, we obtain
    \begin{equation}
        \left\| T_{t} \calP_p \right\|_\onetoone \le \left\| T_t \calP_1 \right\|_\onetoone^p.
    \end{equation}
\end{proof}

\begin{corollary} \label{cor:TQTPp_1to1norm}
    $\|T_{-t} Q T_t \calP_p \| _\onetoone \le p W_t$, where $W_t = 2 \frakd \left( 1 + C_D  \lceil 4e \frakd |t| r_0 \rceil^{D/2}  \right)^4$.
\end{corollary}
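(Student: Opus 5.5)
The plan is to chain the lemmas already proved in this section. First, Lemma~\ref{lem:reduction2deg_one} gives
\[
\|T_{-t} Q T_t \calP_p\|_\onetoone \le p\,\|T_{-t} Q T_t \calP_1\|_\onetoone ,
\]
so it suffices to show $\|T_{-t} Q T_t \calP_1\|_\onetoone \le W_t$.

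To bound the degree-one quantity, use that $\tau^0_t$ preserves degree and $[V,\cdot]$ raises degree by at most $2$. This gives the factorization
\[
T_{-t} Q T_t \calP_1 = (T_{-t}\calP_3)(Q\calP_1)(T_t\calP_1),
\]
already noted before Lemma~\ref{lem:reduction2deg_one}. Submultiplicativity of the induced $\onetoone$ norm then yields
\[
\|T_{-t} Q T_t \calP_1\|_\onetoone \le \|T_{-t}\calP_3\|_\onetoone \,\|Q\calP_1\|_\onetoone\, \|T_t\calP_1\|_\onetoone .
\]

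Each factor is then bounded by an earlier result.

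The middle factor is handled by Lemma~\ref{lem:commutator_matrix_norm} with $p=1$, which gives $\|Q\calP_1\|_\onetoone \le 2\frakd$. That lemma applies because $V$ is part of a $(r_0,\frakd)$-geometrically local Hamiltonian with quartic terms, hence $(\frakd,4)$-bounded-degree local with coefficients bounded by $1$.

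The right factor is handled by Lemma~\ref{lem:eiht_bnd}: $\|T_t\calP_1\|_\onetoone \le 1 + C_D\lceil 4e\frakd|t|r_0\rceil^{D/2}$.

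For the left factor, apply Lemma~\ref{lem:1norm_submulplicativity} with time $-t$, which gives $\|T_{-t}\calP_3\|_\onetoone \le \|T_{-t}\calP_1\|_\onetoone^3$. Then apply Lemma~\ref{lem:eiht_bnd} again; its bound holds for all $t\in\mathbb{R}$ and depends only on $|t|$, so it gives the same quantity as for $+t$.

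Multiplying the three bounds gives exactly $2\frakd\,(1+C_D\lceil 4e\frakd|t|r_0\rceil^{D/2})^4 = W_t$.

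The only point needing care is the degree bookkeeping in the factorization. Under $\tau^0_{\pm t}$, degree-$\le 1$ inputs stay in degree $\le 1$, since $\tau^0_t$ maps $\gamma_j$ to a linear combination of $\gamma_i$ and fixes $I$. Under $[V,\cdot]$ they then land in degree $\le 3$ (in fact exactly $3$, or $0$ for the identity). This justifies inserting the projectors $\calP_1$ and $\calP_3$ without changing the operator.

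Also, the induced $\onetoone$ norm is submultiplicative because it is an operator norm induced by the vector $\ell_1$ norm. I do not expect a genuine obstacle: the statement is a direct assembly of Lemmas~\ref{lem:commutator_matrix_norm}, \ref{lem:reduction2deg_one}, \ref{lem:eiht_bnd}, and \ref{lem:1norm_submulplicativity}.
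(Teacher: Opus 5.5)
Your proposal is correct and matches the paper's proof essentially step for step. You reduce to $p=1$ via Lemma~\ref{lem:reduction2deg_one}, factor $T_{-t}QT_t\calP_1 = T_{-t}\calP_3\,Q\calP_1\,T_t\calP_1$, and bound the three factors using Lemmas~\ref{lem:1norm_submulplicativity} and \ref{lem:eiht_bnd}, Lemma~\ref{lem:commutator_matrix_norm}, and Lemma~\ref{lem:eiht_bnd}, respectively, just as the paper does; your remarks on the degree bookkeeping behind the inserted projectors and on applying Lemma~\ref{lem:eiht_bnd} at time $-t$ make explicit what the paper uses implicitly.
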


\begin{proof}
First, by Lemma \ref{lem:reduction2deg_one},
\begin{equation}
    \|T_{-t} Q T_t \calP_p \| _\onetoone \le p \|T_{-t} Q T_t \calP_1 \| _\onetoone.
\end{equation}
Then
\begin{equation}
        \|T_{-t} Q T_t \calP_1 \| _\onetoone
        = \| T_{-t} \calP_3 Q \calP_1 T_t \calP_1 \|_\onetoone
        \le \| T_{-t} \calP_3 \|_\onetoone \| Q \calP_1 \|_\onetoone \| T_t \calP_1 \|_\onetoone.
\end{equation}
By Lemma \ref{lem:1norm_submulplicativity},
\begin{equation}
    \| T_{-t} \calP_3 \|_\onetoone \le \| T_{-t} \calP_1 \|_\onetoone ^3.
\end{equation}
Since a geometrically local Hamiltonian satisfies the bounded-degree locality condition in Lemma \ref{lem:commutator_matrix_norm}, we have
\begin{equation}
    \|Q\calP_1\|_\onetoone \le 2\frakd.
\end{equation}
Combining this bound with Lemma \ref{lem:eiht_bnd}, we obtain
\begin{equation}
    \|T_{-t} Q T_t \calP_p \| _\onetoone \le p \|T_{-t} Q T_t \calP_1 \| _\onetoone \le p \|T_{-t}\calP_1\|_\onetoone^3 \cdot 2\frakd \cdot \|T_{t}\calP_1\|_\onetoone \le p W_t.
\end{equation}

\end{proof}

\begin{proof}[Proof of Theorem \ref{thm:convergence_maj_one}]

    By \eqref{eq:vec_comm_k},
    \begin{equation}
        \vecrm\left(\Big[ \tau^0_{-t_k} (V), \big[\tau^0_{-t_{k-1}}(V), \cdots, [\tau^0_{-t_1}(V), A]\big] \Big]\right) = T_{-t_k} Q T_{t_k} T_{-{t_{k-1}}} Q T_{{t_{k-1}}} \cdots T_{-t_1} Q T_{t_{1}} \vecrm(A),
    \end{equation}
    which can alternatively be written as
    \begin{equation}
        (T_{-t_k} Q T_{t_k} \calP_{d_0 + 2k - 2} )
        (T_{-{t_{k-1}}} Q T_{{t_{k-1}}}  \calP_{d_0 + 2k - 4}  )
        \cdots (T_{-t_1} Q T_{t_{1}}  \calP_{d_0} ) \vecrm(A),
    \end{equation}
    because each application of $T_{-t_j} Q T_{t_j}$ increases the degree by at most $2$.

    By Corollary \ref{cor:TQTPp_1to1norm}, we have
    \begin{equation}
        \|T_{-t_j} Q T_{t_j} \calP_{d_0 + 2j - 2}\|_\onetoone \le (d_0 + 2j - 2) W_{t_j} \le (d_0 + 2j - 2) W_t.
    \end{equation}
    After $k$ applications, we have
    \begin{equation}
        \left\|\Big[ \tau^0_{-t_k} (V), \big[\tau^0_{-t_{k-1}}(V), \cdots, [\tau^0_{-t_1}(V), A]\big] \Big]  \right\|_\majone \le
        W_t^k \prod_{j=0}^{k-1} \left( d_0 + 2j \right)  \|A\|_\majone
    \end{equation}

    We then have
    \begin{equation}
        \begin{aligned}
            \| \widetilde{A}^{(k)}(t) \|_\majone & =  \left\| (i\lambda)^k \int_0^{t} \dd t_k \int_0^{t_{k}} \dd t_{k-1} \cdots \int_0^{t_{2}} \dd t_1 \, \Big[ \tau^0_{-t_k} (V), \big[\tau^0_{-t_{k-1}}(V), \cdots, [\tau^0_{-t_1}(V), A]\big] \Big]  \right\|_\majone\\
            & \le \lambda^k \int_0^{t} \dd t_k \int_0^{t_{k}} \dd t_{k-1} \cdots \int_0^{t_{2}} \dd t_1 \, \left\|\Big[ \tau^0_{-t_k} (V), \big[\tau^0_{-t_{k-1}}(V), \cdots, [\tau^0_{-t_1}(V), A]\big] \Big]  \right\|_\majone\\
            & \le \lambda^k \frac{|t|^k}{k!} W_t^k \prod_{j=0}^{k-1} \left( d_0 + 2j \right)  \|A\|_\majone \\
            & \le \frac{1}{k!} (\lambda |t| W_t)^k (d_0 + 2k - 2)^k \|A\|_\majone \\
            & \le \frac{1}{k!} (\lambda |t| W_t)^k e^{d_0 + 2k-2} k! \|A\|_\majone \\
            & = (e^2 \lambda |t| W_t )^k e^{d_0-2} \|A\|_\majone.
        \end{aligned}
    \end{equation}
    In the second-to-last line, we used that $x^k \le k! e^x$ when $x\ge 0$.
\end{proof}

\section{Efficient sampling with geometric locality}\label{sec:algorithm}

In this section we present an efficient randomized sampling algorithm for computing $\Tr(\rho_0e^{iHt } A e^{-iHt })$. Throughout this section we assume that $H$ is $(r_0, \frakd)$-geometrically local, and the fermions are arranged on a $D$-dimensional lattice. Instead of computing the integrand in \eqref{eq:integrand_kth_order} directly using linear algebra operations within the truncated subspace, the algorithm keeps track of only one Majorana monomial, which is computationally cheaper.
We present a method that, given input $(A, k, t)$, generates unbiased samples $(\omega, \nu)$ satisfying $\E (\omega \nu) = \widetilde{A}^{(k)}(t)$, where $\omega$ is a signed weight and $\nu$ a Majorana monomial.

Consider the action of $i[\tau^0_{-t}(V), \cdot] $ on a Majorana monomial. The resulting operator is a linear combination of Majorana monomials. Our algorithm samples one of these monomials with probability proportional to the magnitude of its coefficient.
More generally, for $X=\sum_{S\subseteq[2N]} a_S\gamma_S$, by \emph{sampling} from \(X\), we mean sampling the monomial \(\gamma_S\) with probability
\(p(S)=\frac{|a_S|}{\|X\|_{\majone}}\).
This can be implemented efficiently when \(X\) contains only polynomially many terms in \(N\).
We also record the signed sampling weight \(\omega = \sgn(a_S)\|X\|_\majone\). If \(\nu\) is sampled from \(X\) according to the above distribution, then $\E (\omega \nu ) = X$,
so the resulting weighted sample is unbiased.
Given an input Majorana monomial $\nu$ and signed weight $\omega$, we refer to this procedure of sampling from $i[\tau^0_{-t}(V), \nu]$ and updating the signed weight $\omega$ by \textsc{SampleNext}$(\omega, \nu, t)$; it is described in detail later.

Our algorithm \textsc{Sample}$(A, k, t)$ generates unbiased samples for $\widetilde{A}^{(k)}(t)$
by iteratively applying \textsc{SampleNext}.
Without loss of generality, we assume $t>0$. We first focus on the case where $A$ is a single Majorana monomial.
When $A$ is a sum of Majorana monomials, we apply the same procedure to each term and add the results.
Based on \eqref{eq:kth_order}, which we restate here
\begin{equation*}
    \widetilde{A}^{(k)}(t) = (i\lambda)^k \int_0^{t} \dd t_k \int_0^{t_{k}} \dd t_{k-1} \cdots \int_0^{t_{2}} \dd t_1 \, \Big[ \tau^0_{-t_k} (V), \big[\tau^0_{-t_{k-1}}(V), \cdots, [\tau^0_{-t_1}(V), A]\big] \Big],
\end{equation*}
we first sample  tuple $(t_1,\ldots,t_k)$ uniformly distributed over the
time-ordered simplex,  $0\le t_1\le t_2\le \cdots \le t_k\le t$, which can be done as described in the proof of Corollary 
\ref{cor:quasi-polynomial-time}.

We then initialize $\omega$ to be $1$ and $\nu$ to be $A$. Next, we go through the $k$ layers of commutators one by one.
At the $q$th layer, we update $(\omega, \nu)$ using the subroutine \textsc{SampleNext}$(\omega, \nu, t_q)$.
Finally, after going through all the layers,
we multiply the weight by $\lambda^k t^k / k!$ which accounts for the integration and the prefactor.
Because each call to \textsc{SampleNext} produces an unbiased sample, the final sample is unbiased by construction and satisfies
$\E (\omega \nu) = \widetilde{A}^{(k)}(t)$.
The algorithm is described in pseudocode in Algorithm~\ref{alg:sample}.

\begin{figure}[htbp!]
    \centering

    \vspace{-2cm}

    \begin{minipage}{\textwidth}
        \centering
        \includegraphics[width=0.3\linewidth]{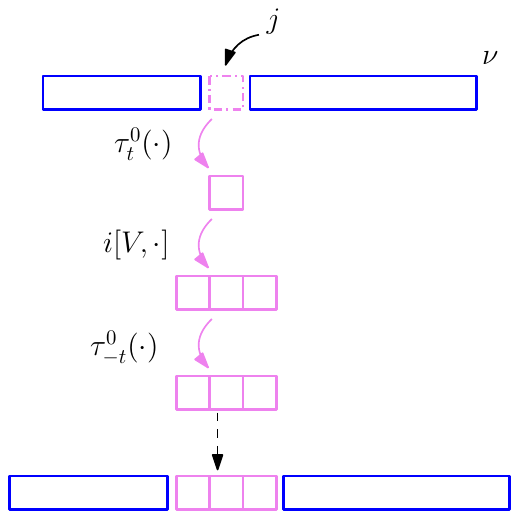}
        \caption{An illustration of one iteration of \textsc{SampleNext}, 
         supposing the input is $(\omega, \nu, t)$. 
         The updating of the weight $\omega$ is omitted in the above diagram.
         First, one Majorana mode $j$ is chosen randomly from $\nu$. Then, each violet arrow means to compute the action of a 
         superoperator on the previously sampled Majorana monomial
         and then sample from the result. The final black dashed arrow means to multiply the 
         final sampled monomial
         by the remaining modes from the input $\nu$. 
         }
        \label{fig:sampling_illustration}
    \end{minipage}

    \begin{algorithm}[H]
    \caption{\textsc{Sample}$(A,k,t)$}
    \label{alg:sample}
    \begin{algorithmic}[1]
        \REQUIRE Initial Majorana monomial $A$, perturbation order $k$,
        and evolution time $t$
        \ENSURE A signed weight $\omega$ and a Majorana monomial $\nu$
    
        \STATE Initialize $(\omega,\nu) \leftarrow (1,A)$ 
        \STATE Sample $(t_1,\ldots,t_k)$ uniformly from the time-ordered simplex
        $
            \left\{(t_1,\ldots,t_k):
            0\leq t_1\leq t_2\leq\cdots\leq t_k\leq t\right\}
        $

        \FOR{$q=1,\ldots,k$}
            \STATE $(\omega,\nu)
            \leftarrow
            \textsc{SampleNext}(\omega,\nu,t_q)$
            \IF{$\omega = 0$}
            \RETURN $(0,0)$
            \ENDIF
        \ENDFOR
    
        \STATE $\omega \leftarrow \lambda^k t^k\omega / k!$ 
        \RETURN $(\omega,\nu)$
    \end{algorithmic}
    \end{algorithm}

        \begin{algorithm}[H]
        \caption{\textsc{SampleNext}$(\omega,\nu,t)$}
        \label{alg:sample-next}
        \begin{algorithmic}[1]
            \REQUIRE Signed weight $\omega$, Majorana monomial
            $\nu=\frakI(\nu)\nu_1\cdots\nu_J$, and time $t$
            \ENSURE Updated signed weight $\omega$ and Majorana monomial $\nu$
        
            \STATE Sample $j$ uniformly from $[J]$
            \STATE $\omega \leftarrow \omega\,J$ 
            \STATE Sample $\nu'$ from $\tau_t^0(\nu_j)$
            \STATE
            $
                \omega \leftarrow
                \omega\,
                \bigl\|\tau_t^0(\nu_j)\bigr\|_{\majone}\,
                \sgn\langle\nu',\tau_t^0(\nu_j)\rangle
            $
        
            \IF{$[V,\nu']=0$}
            \RETURN $(0,0)$
            \ENDIF
            
            \STATE Sample $\nu''$ from $i[V,\nu']$
        
            \STATE
            $
                \omega \leftarrow
                \omega\,
                \bigl\|[V,\nu']\bigr\|_{\majone}\,
                \sgn\langle\nu'',i[V,\nu']\rangle
            $
            \STATE $\nu' \leftarrow \nu''$
        
            \STATE Sample $\nu''$ from $\tau_{-t}^0(\nu')$ 
            \STATE
            $
                \omega \leftarrow
                \omega\,
                \bigl\|\tau_{-t}^0(\nu')\bigr\|_{\majone}\,
                \sgn\langle\nu'',\tau_{-t}^0(\nu')\rangle
            $
            \STATE $\nu' \leftarrow \nu''$
        
            \STATE
            $
                S \leftarrow
                \majsupp(\nu')
                \mathbin{\Delta}
                \majsupp\!\left(
                    \nu_1\cdots\nu_{j-1}\nu_{j+1}\cdots\nu_J
                \right)
            $
        
            \STATE
            $
                \mu \leftarrow
                \frakI(\nu)
                \left(\prod_{m=1}^{j-1}\nu_m\right)
                \nu'
                \left(\prod_{m'=j+1}^{J}\nu_{m'}\right)
            $
        
            \STATE $\omega \leftarrow \omega\,\sgn \langle\gamma_S,\mu \rangle$ 
            \STATE $\nu \leftarrow \gamma_S$
        
            \RETURN $(\omega,\nu)$
        \end{algorithmic}
        \end{algorithm}

\end{figure}

We now formally introduce \textsc{SampleNext},
which takes input $(\omega, \nu, t)$, samples a new Majorana monomial from $i[\tau^0_{-t} V, \nu]$, updates $\omega$, and returns the new $(\omega, \nu)$. The pseudocode is in Algorithm \ref{alg:sample-next}, with an illustrative diagram in Figure \ref{fig:sampling_illustration}. Sampling from $i[\tau^0_{-t} V, \nu]$ is based on the decomposition \eqref{eq:commutator_expansion}, which we restate here for ease of reference (multiplied by $i$ on both sides):
\begin{equation}
    \begin{aligned}
        i[\tau^0_{-t} V, \nu] & = \frakI (\nu) \sum_{j=1}^{J} \Big( \prod_{m=1}^{j-1} \nu_m \Big)  \,  i[ \tau^0_{-t} V, \nu_{j} ] \, \Big( \prod_{m'=j+1}^{J} \nu_{m'} \Big) , \\
        & = \frakI (\nu) \sum_{j=1}^{J} \Big( \prod_{m=1}^{j-1} \nu_m \Big)  \,  \tau^0_{-t} ( i [  V, \tau^0_{t}(\nu_{j}) ] ) \, \Big( \prod_{m'=j+1}^{J} \nu_{m'} \Big).
    \end{aligned}
\end{equation}
As suggested by this equation, we first sample a Majorana operator component $\gamma_j$ uniformly from $\nu$ (Line $1$-$2$ in Algorithm \ref{alg:sample-next}).
Then we update this component by sampling from $i[\tau^0_{-t} V, \gamma_j]$. Since $i[\tau^0_{-t} V, \gamma_j] = \tau^0_{-t} (i [  V, \tau^0_{t}(\nu_{j}) ])$, we do the following three-step procedure: (1) Sample from $\tau^0_t(\nu_j)$ (Line $3$-$4$ in Algorithm \ref{alg:sample-next}). Here, $\nu_j$ is a degree-$1$ monomial.
(2) Sample from $i[V, \nu']$, where we denote the sampled Majorana monomial from the last step by $\nu'$ (Line $5$-$10$ in Algorithm \ref{alg:sample-next}). Here, $\nu'$ is degree-$1$. In particular if the commutator is $0$, we terminate the algorithm and return a zero sample. 
(3) Sample from $\tau^0_{-t}(\nu')$, where we used $\nu'$ again to denote the sampled Majorana monomial from the last step (Line $11$-$13$ in Algorithm \ref{alg:sample-next}). Here, $\nu'$ is degree-$3$.
Finally, we multiply this sample by the remaining modes from the input and format the result in the canonical Majorana monomial form defined in \eqref{eq:Majorana_monomial} together with a signed weight (Line $14$-$18$ in Algorithm \ref{alg:sample-next}).

In the above procedure, the main sampling steps are sampling from $\tau^0_{\pm t} (\nu)$, where $\nu$ is a degree-$1$ or degree-$3$ Majorana monomial, and sampling from $i[V, \nu]$ where $\nu$ is a degree-$1$ Majorana monomial.
By \eqref{eq:htilde_def_all}, the action of $\tau^0_{\pm t} $ on a degree-$1$ Majorana polynomial can be represented by $e^{\pm i\htilde t}$. Computing $e^{\pm i\htilde t}$ can be done in $\Or(N^3)$.\footnote{Since our goal is not to obtain an optimal runtime, but only to make it polynomial, we will use the most direct method to compute $e^{i\htilde t}$, i.e., diagonalize the matrix $\htilde$ and then exponentiate the eigenvalues. This direct method takes time $\mathcal{O}(N^3)$ \cite{GuEtAlTemplates}. More sophisticated methods with better asymptotic scaling are certainly possible, but this is not the focus of this work.} To compute $\tau^0_{\pm t} (\gamma_j)$, we simply extract the $j$th column of this matrix; then we sample from it.
To compute $\tau^0_{\pm t} (\nu)$ where $\nu$ is degree-$3$, we use the multiplicativity of the free evolution: \(\tau^0_{\pm t}(\nu_1 \nu_2 \nu_3) = \tau^0_{\pm t}(\nu_1)\tau^0_{\pm t}(\nu_2)\tau^0_{\pm t}(\nu_3)\). We compute for each Majorana mode separately and multiply the resulting monomials together. Computing for each degree-$1$ component takes time $\Or(N^3)$ and the multiplication takes time $\Or(N^3)$, therefore the overall runtime is $\Or(N^3)$. By locality, only $\Or(\frakd)$ terms in $V$ overlap with $\nu$ when computing $i[V, \nu]$. Therefore, computing and sampling from $i[V, \nu]$ where $\nu$ is degree-$1$ takes time $\Or(N)$. In total, one call to \textsc{SampleNext} takes time $\Or(N^3)$.

In the end, we need to compute the expectation value of a Majorana monomial for a fermionic Gaussian state. This can be computed as the Pfaffian of the corresponding covariance submatrix \cite{surace2022fermionic}. Since computing the covariance submatrix and computing the Pfaffian both take time $\Or(N^3)$ in the worst case \cite{wimmer2012algorithm}, a single Majorana monomial expectation value takes at most $\Or(N^3)$ time to compute.
Since the runtime of \textsc{Sample}$(A, k ,t)$ is dominated by $k$ calls to \textsc{SampleNext}  plus evaluating the expectation value once , \textsc{Sample}$(A, k ,t)$ runs in time
$\OO((k+1)N^3) = \Or(kN^3)$ when $k\ge 1$ ($k=0$ is trivial).

We next bound the sampling variance, which determines the sample complexity.
Let $(\omega^{(k)}, \nu^{(k)})$ be the sample generated by $\textsc{Sample}(A, k, t)$.
As discussed earlier,
$\E (\omega^{(k)} \nu^{(k)}) = \widetilde{A}^{(k)}(t)$.
Initially, $\omega = 1, \nu = A$.
Consider iteration $q$ in \textsc{Sample}, in which \textsc{SampleNext} is called with $(\omega, \nu, t_q)$, where $(\omega, \nu)$ is the weight-monomial pair produced at iteration $q-1$ and $|t_q| \le |t|$.
The four sampling steps in \textsc{SampleNext} contribute weight factors bounded by (1) \(\deg(\nu)\), (2) \(\sup_{\deg(\nu)=1}\|\tau^0_{t_q}(\nu)\|_\majone \le \|T_{t_q} \calP_1\|_\onetoone\) where the supremum is over degree-$1$ Majorana monomials, (3) $\sup_{\deg(\nu)=1}\|i[V, \nu]\|_\majone \le \|Q\calP_1\|_\onetoone$ where the supremum is over degree-$1$ Majorana monomials, (4) \(\sup_{\deg(\nu)=3}\|\tau^0_{-t_q}(\nu)\|_\majone \le \|T_{-t_q} \calP_3\|_\onetoone\) where the supremum is over degree-$3$ Majorana monomials. By Lemma \ref{lem:commutator_matrix_norm}, \ref{lem:1norm_submulplicativity}, \ref{lem:eiht_bnd}, they are further upper bounded respectively by \(
\deg(\nu), L_t, 2\frakd, L_t^3,
\)
where $L_t =  1 + C_D \lceil 4e\frakd |t| r_0\rceil^{D/2} $.
Since the degree before iteration \(q\) is at most \(d_0+2q-2\), multiplying these bounds over the \(k\) iterations gives
\begin{equation}\label{eq:weight_upp_bnd}
    \begin{aligned}
        |\omega^{(k)}| & \le \lambda^k \frac{t^k }{k!}  (2\frakd L_t^4)^k \prod_{j=0}^{k-1} \left( d_0 + 2j \right)\\
        & \le \frac{1}{k!} (\lambda |t| W_t)^k e^{d_0 + 2k-2} k! \\
        & \le (e^2 \lambda |t| W_t )^k e^{d_0-2},
    \end{aligned}
\end{equation}
where
$W_t = 2 \frakd \left( 1 + C_D \lceil 4e\frakd |t| r_0\rceil^{D/2} \right)^4$.

Next, we consider a more general observable $A$ that may not be a single Majorana monomial.
For initial observable $A= \sum_{i=1}^M a_i O_i$, where $O_i$'s are Majorana monomials and $|a_i| \le 1$, we have
\begin{equation}
    \widetilde{A}(t) = \sum_{i=1}^M a_i \sum_{k=0}^{\infty} \E(\omega^{(k)}_i \nu^{(k)}_i),
\end{equation} where $(\omega^{(k)}_i,  \nu^{(k)}_i)$ are samples obtained from $\textsc{Sample}(O_i, k, t)$.
We will generate and average samples for each $(i,k)$ for all $i \in [M]$ and for $k$ up to some truncation order $K$.
We give the following theorem on the overall runtime.

\begin{theorem}\label{thm:sample_complexity_l1}
    Consider an
    interacting fermionic system
    \eqref{eq:interacting_fermions_intro}
    on a $D$-dimensional lattice with a $(r_0, \frakd)$-geometrically local Hamiltonian (Definition~\ref{def:geo_local}). Let $N$ be the number of fermionic modes,
    $A=\sum_{i=1}^M a_i O_i$ be an observable given as a list of Majorana monomials $\{O_i\}_i$ with coefficients $\{a_i\}_i$ satisfying $|a_i|\le 1$, and $\rho_0$ be a Gaussian state. Let
    \begin{equation*}
        W_t = 2 \frakd \left( 1 + C_D \lceil 4e \frakd |t| r_0 \rceil^{D/2}  \right)^4.
    \end{equation*}
    For $e^2 \lambda |t|W_t < 1$,
    there is a sampling algorithm that computes $\Tr(\rho_0\, e^{iHt } A e^{-iHt })$ to within additive error $2\epsilon$ with probability at least $1-\delta$ in time
    \begin{equation*}
        \OO \left( {e^{2d_0}} \left(1 - \left(e^2\lambda |t| W_t \right)^2\right)^{-1} M^3 K^3 \epsilon^{-2} \log(MK/\delta) \, N^3 \right)
    \end{equation*} where $d_0 = \deg(A)$ and
    \begin{equation*}
            K = \Theta \left( \log \left( \frac{M e^{d_0}}{ (1 - e^2 \lambda |t| W_t) \epsilon } \right) \bigg/  \log \frac{1}{e^2 \lambda |t| W_t}  \right)
    \end{equation*}
\end{theorem}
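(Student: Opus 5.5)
The argument reduces the target quantity to a finite sum of estimable scalars, bounds the truncation error with Theorem~\ref{thm:convergence_maj_one}, and controls each Monte Carlo estimator with Hoeffding's inequality, using the almost-sure weight bound \eqref{eq:weight_upp_bnd}.

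\textbf{Reduction.} Since $e^{iHt}Ae^{-iHt}=\tau^0_t(\widetilde A(t))$, we have
\[
\Tr(\rho_0 A(t))=\Tr(\rho_t^0\,\widetilde A(t)),\qquad \rho^0_t:=e^{-iH^0t}\rho_0e^{iH^0t}.
\]
The state $\rho^0_t$ is again Gaussian, and its covariance matrix is obtained from $e^{\pm i\htilde t}$ in $\OO(N^3)$. Alternatively, we can apply $\tau^0_t$ to the sampled monomial, which only needs multiplicativity. Writing $\widetilde A=\sum_i a_i\widetilde O_i$, the target is
\[
\sum_{i=1}^M a_i\sum_{k\ge0}\Tr\big(\rho^0_t\,\widetilde O_i^{(k)}(t)\big).
\]

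\textbf{Truncation.} We truncate at order $K$. Since $|\Tr(\rho X)|\le\|X\|\le\|X\|_\majone$, Theorem~\ref{thm:convergence_maj_one} applied to each monomial $O_i$ (which has $\|O_i\|_\majone=1$ and $\deg\le d_0$) gives a tail bound of
\[
\sum_i|a_i|\sum_{k>K}r^k e^{d_0-2}\le M e^{d_0}\frac{r^{K+1}}{1-r},\qquad r:=e^2\lambda|t|W_t<1.
\]
This is at most $\epsilon$ with the stated choice of $K$.

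\textbf{Estimation.} For each pair $(i,k)$ with $k\le K$, we draw $m$ independent samples $(\omega,\nu)=\textsc{Sample}(O_i,k,t)$ and form the scalar $X=\omega\Tr(\rho^0_t\nu)$.
\begin{itemize}
\item By unbiasedness, $\E X=\Tr(\rho^0_t\widetilde O_i^{(k)}(t))$. This is linearity of the trace combined with $\E(\omega\nu)=\widetilde O_i^{(k)}(t)$, which holds layer by layer since each sampling step in \textsc{SampleNext} is unbiased given the previous one, and by the decomposition \eqref{eq:commutator_expansion}.
\item By \eqref{eq:weight_upp_bnd} and $|\Tr(\rho\nu)|\le1$, the variable is bounded almost surely: $|X|\le r^ke^{d_0-2}$.
\end{itemize}
Hoeffding's inequality with a union bound over the $MK$ pairs shows the following. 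If each empirical mean is within $\epsilon_{ik}$ of its expectation with failure probability $\delta/(MK)$, then it suffices to take
\[
m_{ik}=\OO\big(r^{2k}e^{2d_0}\epsilon_{ik}^{-2}\log(MK/\delta)\big).
\]

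\textbf{Error allocation and runtime.} The simplest choice is uniform: $\epsilon_{ik}=\epsilon/(MK)$, since $|a_i|\le1$. The total sample count is then
\[
\sum_{i,k}m_{ik}\le M\cdot\frac{M^2K^2}{\epsilon^2}\,e^{2d_0}\sum_k r^{2k}\log(MK/\delta)=\OO\big(M^3K^2e^{2d_0}(1-r^2)^{-1}\epsilon^{-2}\log(MK/\delta)\big).
\]
Each sample costs $\OO(kN^3)\le\OO(KN^3)$ time. This cost comes from $k$ calls to \textsc{SampleNext}, each of cost $\OO(N^3)$ as argued above, plus one Pfaffian evaluation. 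Multiplying gives the claimed $M^3K^3N^3$ bound.

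The two error sources are $\epsilon$ from truncation and $\epsilon$ from sampling, giving $2\epsilon$ in total. The final success probability is $1-\delta$.

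\textbf{Main obstacle.} The step requiring the most care is rigorously justifying the almost-sure weight bound together with unbiasedness through the full chain of \textsc{SampleNext}. The factors to account for are the degree factor $J\le d_0+2q-2$ and the norm factors $L_t$, $2\frakd$, $L_t^3$. The last of these needs Lemma~\ref{lem:1norm_submulplicativity} applied to degree-$3$ monomials, and we must check that the sign and prefactor bookkeeping in Lines 14--18 has unit modulus and introduces no extra growth. Once this is in place, the rest is standard Hoeffding plus union-bound accounting.
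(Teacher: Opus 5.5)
Your proposal is correct and follows the paper's proof. Both arguments truncate via Theorem~\ref{thm:convergence_maj_one}, bound the scalar samples $\omega\,\Tr(e^{-iH^0t}\rho_0e^{iH^0t}\nu)$ almost surely through \eqref{eq:weight_upp_bnd}, and apply Hoeffding with accuracy $\epsilon/(MK)$ and failure probability $\delta/(MK)$ per pair, at $\OO(KN^3)$ cost per sample. One small slip: for $k=0$ the weight is $|\omega^{(0)}|=1$, which exceeds $e^{d_0-2}$ when $d_0\le 1$, so the uniform bound should be $e^{d_0}r^k$ (the paper's $G_{i,k}$); your sample count already uses this bound, so nothing downstream changes.
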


\begin{proof}
    Due to Theorem~\ref{thm:convergence_maj_one}, the truncation error is upper bounded by
    \begin{equation*}
        \sum_{k=K+1}^{\infty} Me^{d_0-2} \left( e^2 \lambda |t| W_t  \right)^k = Me^{d_0-2} \frac{\left( e^2 \lambda |t| W_t  \right)^{K+1}}{1 - e^2 \lambda |t| W_t }.
    \end{equation*}
    Therefore, we truncate at $K$ as given in the theorem, controlling the truncation error to $\epsilon$.

    We want to evaluate each term to within additive error $\epsilon/(MK)$ with probability $1 - \delta/(MK)$.
    For the $k$th order of the $i$th term, we compute scalar samples \( \Tr\left( e^{-iH^0t} \rho_0 e^{iH^0t} a_i \omega_i^{(k)} \nu_{i}^{(k)}  \right) \). Because every Majorana monomial has operator norm $1$ and $|a_i|\le 1$, the scalar samples are bounded by \(|\omega_i^{(k)}|\) in absolute value.
    \(e^{-iH^0t} \rho_0 e^{iH^0t}\) remains Gaussian, and therefore we will compute Majorana monomial expectation via computing the corresponding Pfaffian.
    Since $|\omega_i^{(k)}| \le e^{d_0-2} (e^2 \lambda |t| W_t)^k$ when $k\ge 1$, and  $|\omega_i^{(0)}| = 1$, we let $G_{i,k} = e^{d_0} (e^2 \lambda |t| W_t)^k$, which upper bounds $|\omega_i^{(k)}|$ for all $k\ge 0$.\footnote{Note that in its current form $G_{i,k}$ does not depend on $i$, because we let  the observable coefficients be bounded by the universal constant $1$. In the more general setting $G_{i,k}$ depends on $|a_i|$, so here we retain the subscript $i$ for clarity.}
    Then, by Hoeffding's inequality, the  sample complexity for the $(i,k)$th-term is
    \begin{equation*}
        \OO\left( G_{i,k}^2  \left( \frac{MK}{\epsilon} \right)^2 \log \frac{MK}{\delta} \right).
    \end{equation*}
    Since each sample for each term can be generated in time $\OO(KN^3)$ as discussed above,
    the total runtime is
    \begin{equation*}
        \OO\left( \left( \sum_{k=0}^K \sum_{i=1}^M G_{i,k}^2 \right) \cdot \left( \frac{MK}{\epsilon} \right)^2 \log \frac{MK}{\delta} \cdot K N^3 \right).
    \end{equation*}
    We upper bound $\sum_{k=0}^K \sum_{i=1}^M G_{i,k}^2$ by
    \begin{equation*}
        \sum_{k=0}^\infty \sum_{i=1}^M G_{i,k}^2 \le M e^{2d_0} \Big/ \left( 1 - \left(e^2\lambda |t| W_t \right)^2\right).
    \end{equation*}
    This gives the runtime
    \begin{equation*}
        \OO \left(
            M e^{2d_0} \Big/ \left( 1 - \left(e^2\lambda |t| W_t \right)^2\right) \cdot \left( \frac{MK}{\epsilon} \right)^2 \log \frac{MK}{\delta} \cdot K N^3
         \right)
    \end{equation*} in the theorem.
\end{proof}

\section{Application to Anderson localization}\label{sec:Anderson}

In this section, we show a tighter bound of $\|e^{i\htilde t}\|_{\onetoone}$ from the perspective of quantum transport for the Anderson model.
For concreteness, let us focus on the case where each site has a single fermionic mode.
The free Hamiltonian $H^0$ in the Anderson model consists of a kinetic energy term and a random potential of strength $\Delta > 0$
diagonal in the position basis:
\begin{equation}
    H^0 = T + \Delta V_\omega.
\end{equation}
In terms of creation and annihilation operators,
\begin{equation}\label{eq:Anderson_Hamiltonian}
    H^0 = \sum_{x, y \text{ nearest neighbors}} t_{xy} \hat{\psi}_x^\dag \hat{\psi}_y + \Delta \sum_x {\omega_x} \hat{\psi}_x^\dag \hat{\psi}_x.
\end{equation}
$\{\omega_x\}_x$ are i.i.d.\ random variables in $\mathbb{R}$ with a bounded and compactly supported density,
$(t_{x,y})$ is a Hermitian matrix satisfying $|t_{x,y}|\leq 1$.
It is known that for $D=1$, the entire spectrum is localized at any level of disorder $\Delta > 0$. For $D \ge 2$, we focus on the large-disorder regime where the entire spectrum is localized, i.e. we require $\Delta \geq \Delta_0$ for some threshold $\Delta_0=\Or(1)$ \cite[Theorem~3]{stolz2011introanderson} (also in \cite{aizenman1993localization}).

In this section, we consider an interacting Hamiltonian of the following form:
\begin{equation}
    \label{eq:interacting_anderson_ham}
    H = H^0 + \lambda \sum_{i,j,k,l=1}^{2N} v_{ijkl} \gamma_{i}\gamma_{j}\gamma_{k}\gamma_{l},
\end{equation}
with $H^0$ defined in \eqref{eq:Anderson_Hamiltonian}. $|v_{ijkl}|\le1$. We assume that the Majorana modes are arranged on a $D$-dimensional lattice and that $H$ is $(r_0, \frakd)$-geometrically local.
We note that in this section we no longer assume that the coefficient of each term in $H^0$ is bounded by $1$, but instead allow $\Delta$ to be a large constant.

Dynamical localization gives the following bound on the expected transition amplitude:
\begin{equation}
\label{eq:dynamical_localization}
    \E \left(  \sup_{t\in \mathbb{R}} \left| \braket{ e_y|  e^{-iH^0 t} |e_x  }  \right| \right) \le C e^{-\mu d(x, y)}, \quad \forall x, y \in \Gamma
\end{equation} for some constant $C < \infty$ and $\mu > 0$. Here $\ket{e_x} = \hat{\psi}_x^\dag \ket{0}$ and $\ket{e_y}=\hat{\psi}_y^\dag \ket{0}$ are eigenstates of the position operator. This bound comes as a result of combining Theorems 3 and 4 in \cite{stolz2011introanderson}.
In previous sections we worked in the Majorana basis. Note that we can readily switch between the Majorana operators and the creation and annihilation operators in \eqref{eq:dynamical_localization}. In this section, let $\htilde$ be the matrix representation of $H^0$ in \eqref{eq:Anderson_Hamiltonian}.
Suppose $\gamma_\alpha = \hat{\psi}_x^\dag + \hat{\psi}_x$ and $\gamma_\beta =  i \left( \hat{\psi}_y^\dag - \hat{\psi}_y \right) $ are two Majorana operators on site $x$ and $y$, where the modes $\alpha, \beta \in [2N]$,
\begin{equation*}
    \begin{aligned}
       \left| \braket{ \beta|  e^{i \htilde t} |\alpha  }  \right|
        & = \left| \overline{\Tr}\left( \gamma_\beta e^{iH^0 t} \gamma_\alpha e^{-iH^0 t}  \right) \right|  \\
        & \le \left| \overline{\Tr}\left( \hat{\psi}_y^\dag e^{iH^0 t} \hat{\psi}_x e^{-iH^0 t}  \right) \right|
            + \left| \overline{\Tr}\left( \hat{\psi}_ye^{iH^0 t} \hat{\psi}_x^\dag e^{-iH^0 t}  \right) \right| \\
        & \qquad + \underbrace{\left| \overline{\Tr}\left( \hat{\psi}_y^\dag e^{iH^0 t} \hat{\psi}_x^\dag e^{-iH^0 t}  \right) \right|}_{0} + \underbrace{\left| \overline{\Tr}\left( \hat{\psi}_y e^{iH^0 t} \hat{\psi}_x e^{-iH^0 t}  \right) \right|}_{0}  \\
        & = \frac{1}{2} \left| \braket{e_x| e^{iH^0 t} |e_y} \right| + \frac{1}{2} \left| \braket{e_y| e^{-iH^0 t} |e_x} \right|.
    \end{aligned}
\end{equation*}
Here, $\overline{\Tr}$ is the normalized trace. We have used the fact that $H^0$ is particle-number conserving so that two of the terms vanish, together with the change-of-basis formula
\begin{equation*}
    {\psi}_x^\dag(t) = \sum_{y} \braket{e_y | e_x(t) } \psi_{y}^\dag, \quad
    {\psi}_x(t) = \sum_{y} \braket{ e_x(t) | e_y} \psi_{y},
\end{equation*}
and the identity $\overline{\Tr}(\hat{\psi}_x^\dag \hat{\psi}_y)=\delta_{xy}/2$ at the end.
Then, it follows from \eqref{eq:dynamical_localization} that
\begin{equation} \label{eq:Majorana_localization}
    \E \left(  \sup_{t\in \mathbb{R}} \left| \braket{ \beta|  e^{i \htilde t} |\alpha  }  \right| \right) \le C e^{-\mu d(\alpha,\beta)}, \quad \forall \alpha, \beta \in [2N],
\end{equation} where the distance between two modes is induced by the distance between sites.

\eqref{eq:Majorana_localization} leads to a pointwise probability bound on the $\ell_1$ norm of $e^{i\htilde t}\ket{\alpha}$ applied to each mode $\alpha$, but it does not directly give the bound for $\| e^{i\htilde t} \|_{\onetoone}$ that we need, which is a uniform bound over all modes. In order to upgrade this pointwise bound to a uniform bound over all relevant modes, we consider a restriction of the Hamiltonian $H$ to a subregion $\calW\subseteq [2N]$, where only the terms fully supported in $\calW$ are included, and apply a union bound to this subregion. Let $H|_\calW, H^0|_\calW$ denote the restricted Hamiltonian and the restricted free Hamiltonian, respectively. Let $\htilde|_\calW$ denote the restricted matrix representation of $H^0|_\calW$. The restriction does not aid in operator spreading, in the sense that the localization bound \eqref{eq:Majorana_localization} holds for the restricted free Hamiltonian $\htilde|_\calW$ with constants independent of $\calW$ as long as $\calW$ has finite volume \cite{stolz2011introanderson,aizenman1993localization,Aizenman_2001finitevolume}. We show in the following lemma that applying the union bound introduces a logarithmic dependence on the volume in the $\onetoone$ norm.

\begin{lemma}\label{lem:volume_dependent_1to1norm}
    Let $\calW \subseteq [2N]$.
    Suppose $\htilde|_\calW$ satisfies dynamical localization \eqref{eq:Majorana_localization} with constants $C, \mu$. Then, there exists $C' = \Or(C), \mu' = \Or(\mu)$, such that
    \[
    \sup_{t\in\mathbb{R}} \left\| e^{i\htilde|_\calW t} \right\|_\onetoone \le 1 + C_D \left\lceil \frac{1}{\mu'} \log \frac{|\calW|C'}{\delta} \right\rceil^{\frac{D}{2}}
    \] with probability $1-\delta$. Here $C_D$ is the same $D$-dependent constant as in Lemma \ref{lem:eiht_bnd}.
\end{lemma}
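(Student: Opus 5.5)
Mirroring the proof of Lemma~\ref{lem:eiht_bnd}, I will split each column of $e^{i\htilde|_\calW t}$ into a near part and a far part, with the cutoff radius $l$ chosen as a function of $\delta$ rather than of $t$. The near part is handled deterministically by unitarity; the far part is handled by Markov's inequality and a union bound over columns.

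Fix a mode $\alpha\in\calW$ and a radius $l>0$. For the near part, unitarity of $e^{i\htilde|_\calW t}$ gives $\sum_{\beta:d(\beta,\alpha)<l}|\braket{\beta|e^{i\htilde|_\calW t}|\alpha}|^2\le 1$. Cauchy--Schwarz then gives
\[
\sum_{\beta:d(\beta,\alpha)<l}|\braket{\beta|e^{i\htilde|_\calW t}|\alpha}|\le C_D\lceil l\rceil^{D/2}.
\]
This holds for all $t$ at once, with the same constant $C_D$ as in Lemma~\ref{lem:eiht_bnd}.

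For the far part, define the random variable
\[
F_\alpha(l):=\sum_{\beta: d(\beta,\alpha)\ge l}\sup_{t}|\braket{\beta|e^{i\htilde|_\calW t}|\alpha}|.
\]
It upper bounds the far-part $\ell_1$ mass uniformly in $t$. By linearity of expectation and \eqref{eq:Majorana_localization},
\[
\E F_\alpha(l)\le C\sum_{\beta:d(\beta,\alpha)\ge l}e^{-\mu d(\beta,\alpha)}.
\]
To estimate this sum, I will group the $\beta$ into shells $r\le d(\beta,\alpha)<r+1$. Each shell contains at most $C_D^2\lceil r+1\rceil^D$ modes, so the sum is at most $C\sum_{r\ge \lfloor l\rfloor}C_D^2(r+1)^De^{-\mu r}$. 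Absorbing the polynomial factor into the exponential by halving the rate, this is at most $C'e^{-\mu' l}$ with $\mu'=\mu/2$ and $C'=\Or(C)$. Here $C'$ also carries $D$- and $\mu$-dependent factors, which the statement's $\Or(\cdot)$ allows.

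By Markov's inequality, $\PP(F_\alpha(l)>1)\le C'e^{-\mu' l}$. A union bound over the $|\calW|$ columns $\alpha\in\calW$ shows that, with probability at least $1-|\calW|C'e^{-\mu' l}$, every column satisfies $F_\alpha(l)\le 1$. I then choose $l=\frac{1}{\mu'}\log\frac{|\calW|C'}{\delta}$, which makes this failure probability at most $\delta$. On that event, for every $t$ and every $\alpha$ the column $\ell_1$ norm is at most $1+C_D\lceil l\rceil^{D/2}$. Taking the maximum over $\alpha$ and the supremum over $t$ gives the claimed bound.

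The main obstacle is the order of the supremum over $t$ and the random event. The supremum must sit inside the expectation, as it does in \eqref{eq:dynamical_localization}, so that a single event works for all $t$. The near part needs no probability at all, so only the far part requires this care. Everything else is bookkeeping of constants in the lattice shell sum.
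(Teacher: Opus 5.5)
Your proposal is correct and follows essentially the same argument as the paper. It has a deterministic near-part bound via unitarity and Cauchy--Schwarz, and it moves the supremum over $t$ inside the far-part sum so that the localization estimate applies. It then uses Markov's inequality, a union bound over $\alpha\in\calW$, and the choice $l=\frac{1}{\mu'}\log\frac{|\calW|C'}{\delta}$; your shell-sum derivation of $C'$ and $\mu'=\mu/2$ just makes explicit the step the paper summarizes as ``integrating over the lattice.''
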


\begin{proof}
    Let $P(\alpha, \beta; t)$ denote the transition amplitude at time $t$: $P(\alpha, \beta; t) = |\braket{ \beta|  e^{i \htilde|_\calW t} |\alpha  }|$.
    Let $P(\alpha, \beta)$ denote the supremum transition amplitude: $P(\alpha, \beta) := \sup_{t\in \mathbb{R}} |\braket{ \beta|  e^{i \htilde|_\calW t} |\alpha  }| = \sup_{t \in \mathbb{R}} P(\alpha,\beta;t)$. $P(\cdot, \cdot)$ is not necessarily stochastic.

    Let $r > 0$. First, for any fixed $t \in \mathbb{R}$ and $\alpha \in [2N]$, since $\sum_{\beta: d(\alpha, \beta) < r} P(\alpha, \beta; t)^2 \le 1$, we have $\sum_{\beta: d(\alpha, \beta) < r} P(\alpha, \beta; t) \le C_D \lceil r \rceil ^{\frac{D}{2 }}$, where $C_D>0$ is the $D$-dependent constant that satisfies $|\{\beta: d(\beta,\alpha) < l\}|\leq C_D^2 \lceil l \rceil^D$.

    We then show that
    \begin{equation}\label{eq:r_dependent_1to1norm}
        \sup_t \left\| e^{i\htilde|_\calW t} \right\|_\onetoone \le C_D \lceil r \rceil^{\frac{D}{2}} + \sup_\alpha \sum_{\beta: d(\alpha,\beta) \ge r} P(\alpha, \beta).
    \end{equation}
    For the left-hand side, we have
    \begin{equation}
        \begin{aligned}
            \sup_{t\in\mathbb{R}} \left\| e^{i\htilde|_\calW t} \right\|_\onetoone
            & = \sup_{t\in\mathbb{R}} \sup_{\alpha} \sum_{\beta} P(\alpha, \beta; t) \\
            & \le \sup_{t\in\mathbb{R}} \sup_{\alpha} \left( C_D \lceil r \rceil ^{\frac{D }{2}} + \sum_{\beta: d(\alpha, \beta) \ge r} P(\alpha, \beta; t)  \right) \\
            & = C_D \lceil r \rceil ^{\frac{D }{2}} + \sup_{\alpha} \sup_{t\in\mathbb{R}} \sum_{\beta: d(\alpha, \beta) \ge r} P(\alpha, \beta; t) \\
            & \le C_D \lceil r \rceil ^{\frac{D }{2}} + \sup_{\alpha}  \sum_{\beta: d(\alpha, \beta) \ge r} \sup_{t\in\mathbb{R}} P(\alpha, \beta; t) \\
            & = C_D \lceil r \rceil^{\frac{D}{2}} + \sup_\alpha \sum_{\beta: d(\alpha,\beta) \ge r} P(\alpha, \beta).
        \end{aligned}
    \end{equation}

    By integrating over the lattice, \eqref{eq:Majorana_localization} implies that
    \begin{equation}
       \E \left( \sum_{\beta: d(\alpha,\beta) \ge r} P(\alpha, \beta) \right)
       \le C' e^{-\mu' r}
    \end{equation} for constants $C' = \Or(C), \mu' = \Or(\mu)$. By Markov's inequality,
    \begin{equation}
        \PP \left( \sum_{\beta: d(\alpha,\beta) \ge r} P(\alpha, \beta) > 1 \right) \le C' e^{-\mu' r}.
    \end{equation}
    By a union bound,
    \begin{equation} \label{eq:union_bnd}
        \PP \left( \exists \alpha \in \calW: \sum_{\beta: d(\alpha,\beta) \ge r} P(\alpha, \beta) > 1 \right) \le C' e^{-\mu' r} |\calW|.
    \end{equation}

    We choose
    \begin{equation}
        r = \left\lceil \frac{1}{\mu'} \log \frac{|\calW|C'}{\delta} \right\rceil.
    \end{equation} Then, the probability \eqref{eq:union_bnd} is at most $\delta$.
    Combining with \eqref{eq:r_dependent_1to1norm}, we have
    \begin{equation}
        \sup_{t\in\mathbb{R}} \left\| e^{i\htilde|_\calW t} \right\|_\onetoone \le 1 + C_D \left\lceil \frac{1}{\mu'} \log \frac{|\calW|C'}{\delta} \right\rceil^{\frac{D}{2}}
    \end{equation} with probability $1-\delta$.
\end{proof}

We consider initial observable $A = \sum_{i=1}^M a_iO_i$. For a mode $\alpha\in [2N]$, let $\mathfrak{B}_{R}(\alpha)$ be the ball of modes of radius $R$ centered at $\alpha$. For a collection of modes $\Lambda \subseteq [2N]$ , let $\mathfrak{B}_{R}(\Lambda) := \cup_{\alpha\in \Lambda} \mathfrak{B}_{R}(\alpha)$.
We consider restricting the dynamics to $\mathfrak{B}_R(\supp(A))$.
For our purpose of simulating observable evolution, this introduces an error that exponentially decays with $R$ \cite{Haah2021latticehamiltonians,nachtergaele2010liebrobinsonbounds,Nachtergaele2008LRboundharmoniclattice}. On-site Hamiltonians, even with arbitrarily large coefficients, do not enter the Lieb-Robinson propagation velocity \cite{Nachtergaele2008LRboundharmoniclattice}. We focus on the weak interaction regime, assuming $\lambda < \lambda_0 = \Or(1)$. Then, there exist constants $B, J, \xi = \Or(1) $ that depend only on $\lambda_0, \frakd, r_0$ and $D$ such that when $R > J|t|$, the following holds:
\begin{equation} \label{eq:LR_bound}
        \| e^{iH t} A e^{-iH t} - e^{iH|_{\mathfrak{B}_R(\supp(A))} t} A e^{-iH|_{\mathfrak{B}_R(\supp(A))} t} \|
        \le B |\supp(A)| \|A\| e^{- \xi R}
\end{equation} where $H|_{\mathfrak{B}_R(\supp(A))}$ is the restriction of the Hamiltonian to the region $\mathfrak{B}_R(\supp(A))$. We choose
\begin{equation} \label{eq:choice_of_R}
    R = \left \lceil J|t| +  \frac{1}{\xi } \log \left( \frac{B|\supp(A)| \|A\|}{\epsilon} \right)\right \rceil
\end{equation}
to control the volume-restriction error to $\epsilon$. Substituting $\calW$ with $\mathfrak{B}_R(\supp(A))$ in Lemma \ref{lem:volume_dependent_1to1norm}, the following corollary on simulating the weakly interacting Anderson model follows by arguments similar to those in the previous sections.
When $d_0, M, \epsilon$ are regarded as constants, this corollary extends the efficiently simulable regime to \[\lambda |t| \log^{2D}(|t|) = \Or(1),\] asymptotically ($\lambda \to 0, |t| \to \infty$).

\begin{corollary}\label{cor:samplecomplexity_anderson}
    Consider an
    interacting fermionic system \eqref{eq:interacting_anderson_ham}
    on a $D$-dimensional lattice with a $(r_0, \frakd)$-geometrically local Hamiltonian (Definition~\ref{def:geo_local}).
    Suppose that $\Delta > 0$ if $D=1$ and $\Delta\geq \Delta_0$ for a threshold $\Delta_0=\Or(1)$ that depends only on $D,r_0,\frakd$
    if $D \ge 2$.  Let $N$ be the number of fermionic modes,
    $A=\sum_{i=1}^M a_i O_i$ be an observable given as a list of Majorana monomials $\{O_i\}_i$ with coefficients $\{a_i\}_i$ satisfying $|a_i|\le 1$, and $\rho_0$ be a Gaussian state.
    Let 
    \[
    W_t = 2\frakd \left( 1 + C_D \left\lceil \frac{1}{\mu'} \left(  \log \frac{C'}{\delta}  + \vartheta \right)\right\rceil^{\frac{D}{2}} \right)^4,
    \] where 
    \[
    \vartheta = \Or\left( D \log \left(  |t| + \log \frac{Md_0}{\epsilon}\right) + \log (Md_0) \right)
    \] and $\mu', C', C_D$ are constants.
    When $e^2 \lambda |t|W_t < 1$,
    Algorithm \ref{alg:sample} computes $\Tr(\rho_0\, e^{iHt } A e^{-iHt })$ to within additive error $3\epsilon$ with probability at least $1-2\delta$
    in time
    \begin{equation*}
        \OO \left( {e^{2d_0}} \left(1 - \left(e^2\lambda |t| W_t \right)^2\right)^{-1} M^3 K^3 \epsilon^{-2} \log(MK/\delta) \, N^3 \right)
    \end{equation*}
    where $d_0 = \deg(A)$ and
    \begin{equation*}
        K = \Theta \left( \log \left( \frac{M e^{d_0}}{ (1 - e^2 \lambda |t| W_t) \epsilon } \right) \bigg/  \log \frac{1}{e^2 \lambda |t| W_t}  \right)
    \end{equation*}
\end{corollary}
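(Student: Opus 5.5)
The plan is to reduce to the setting of Theorem~\ref{thm:sample_complexity_l1}, replacing the light-cone bound of Lemma~\ref{lem:eiht_bnd} by the localization bound of Lemma~\ref{lem:volume_dependent_1to1norm}. The errors come from three sources: volume restriction, series truncation, and Monte Carlo sampling, giving $3\epsilon$ in total. The failure events are the disorder event and the sampling event, giving $2\delta$ in total.

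\textbf{Step 1 (volume restriction).} Set $\calW=\mathfrak{B}_R(\supp(A))$ with $R$ chosen as in \eqref{eq:choice_of_R}. By the Lieb--Robinson estimate \eqref{eq:LR_bound}, replacing $H$ by $H|_\calW$ changes $e^{iHt}Ae^{-iHt}$ by at most $\epsilon$ in operator norm. Since $\rho_0$ is a state, the expectation value changes by at most $\epsilon$. We may apply this termwise to each $O_i$, or to $A$ as a whole; here $\|A\|\le M$ and $|\supp(A)|\le Md_0$.

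Next, count the volume. Since $|\calW|\le |\supp(A)|\, C_D^2\lceil R\rceil^D$, we get
\[
\log|\calW| = \Or\bigl(D\log(|t|+\log(Md_0/\epsilon)) + \log(Md_0)\bigr),
\]
which is $\vartheta$.

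\textbf{Step 2 (uniform $1$-$1$ bound).} Apply Lemma~\ref{lem:volume_dependent_1to1norm} to $\htilde|_\calW$. This is legitimate because the restricted Anderson Hamiltonian still satisfies \eqref{eq:Majorana_localization} with volume-independent constants. With probability $1-\delta$ over the disorder,
\[
\sup_s\|e^{i\htilde|_\calW s}\|_\onetoone \le L:=1+C_D\lceil \tfrac{1}{\mu'}(\log\tfrac{C'}{\delta}+\vartheta)\rceil^{D/2}.
\]
The supremum is over all times, so the bound holds simultaneously for every $t_q$ and every $-t_q$. On this event, this single bound replaces $L_t$ everywhere.

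The only other ingredient needing re-examination is $\|Q\calP_1\|_\onetoone\le 2\frakd$. It still holds for the restricted interaction, because Lemma~\ref{lem:commutator_matrix_norm} uses only bounded degree and $|v_{ijkl}|\le 1$.

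A point to check carefully is that the large on-site coefficient $\Delta$ never enters. Lemma~\ref{lem:eiht_bnd} is no longer used, and Lemmas~\ref{lem:reduction2deg_one} and \ref{lem:1norm_submulplicativity} together with Corollary~\ref{cor:TQTPp_1to1norm} are purely algebraic. Hence $\|T_{-s}QT_s\calP_p\|_\onetoone\le p\,2\frakd L^4=pW_t$.

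\textbf{Step 3 (convergence and sampling).} Rerun the proof of Theorem~\ref{thm:convergence_maj_one} verbatim with this $W_t$. This gives $\|\widetilde A^{(k)}(t)\|_\majone\le (e^2\lambda|t|W_t)^k e^{d_0-2}\|A\|_\majone$ for the restricted dynamics. The same chain yields the weight bound \eqref{eq:weight_upp_bnd}.

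Then run Algorithm~\ref{alg:sample} on $H|_\calW$. Truncate at the stated $K$ to incur error $\epsilon$. Use Hoeffding's inequality with failure probability $\delta$ to incur another $\epsilon$, exactly as in the proof of Theorem~\ref{thm:sample_complexity_l1}. This yields the same runtime, with $N^3$ bounding the per-step cost since $|\calW|\le 2N$.

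\textbf{Main obstacle.} The main subtlety is the bookkeeping that makes the probability-$1-\delta$ disorder event independent of the sampled times $t_q$; this is handled by the $\sup_s$ in Lemma~\ref{lem:volume_dependent_1to1norm}. The other subtlety is verifying that $\log|\calW|$ yields exactly the stated $\vartheta$. Combining the three error terms and the two failure events gives additive error $3\epsilon$ with probability $1-2\delta$.
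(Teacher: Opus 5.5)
Your proposal is correct and follows essentially the same route as the paper. You restrict to $\mathfrak{B}_R(\supp(A))$ via the Lieb--Robinson bound (one $\epsilon$), bound the volume to get $\vartheta$, and invoke Lemma~\ref{lem:volume_dependent_1to1norm} to replace $L_t$ by a time-uniform bound that holds with probability $1-\delta$; you then rerun Theorems~\ref{thm:convergence_maj_one} and~\ref{thm:sample_complexity_l1} with the new $W_t$, which costs two more $\epsilon$'s and one more $\delta$. Your extra checks, that $\Delta$ never enters and that the $\sup_s$ makes the disorder event hold simultaneously for all sampled $t_q$, are implicit in the paper's argument and you handle them correctly.
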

We note that the only difference between the time complexity in this corollary and that in the more general setting of Theorem \ref{thm:sample_complexity_l1} is the form of $W_t$. In this corollary, the dependence of $W_t$ on $|t|$ is polylogarithmic, whereas in Theorem \ref{thm:sample_complexity_l1}, $W_t$ grows polynomially with $|t|$.

\begin{proof}[Proof of Corollary \ref{cor:samplecomplexity_anderson}]   We allocate one $\epsilon$ for the volume truncation error \eqref{eq:LR_bound} and the other $2\epsilon$ for the sampling algorithm error as in Theorem \ref{thm:sample_complexity_l1}. We allocate one $\delta$ for the failure of Lemma \ref{lem:volume_dependent_1to1norm} and the other one $\delta$ for the failure of our sampling algorithm. By the assumption of the corollary, $|\supp(A)| \le Md_0$ and $|\supp(A)| \|A\| \le M^2 d_0$. By \eqref{eq:choice_of_R},
    \begin{equation}
        R = \Theta \left( |t| + \log \frac{Md_0}{\epsilon}   \right).
    \end{equation} The volume
    \begin{equation}
        |\mathfrak{B}_R(\supp(A))| = \Or(|\supp(A)| R^D) = \Or( Md_0 R^D).
    \end{equation}
    Substituting into Lemma \ref{lem:volume_dependent_1to1norm}, we obtain
    \begin{equation}
        \sup_{t\in\mathbb{R}} \left\| e^{i\htilde|_\calW t} \right\|_\onetoone \le 1 + C_D \left\lceil \frac{1}{\mu'} \log \frac{|\mathfrak{B}_R(\supp(A))|C'}{\delta} \right\rceil^{\frac{D}{2}} = 1 + C_D \left\lceil \frac{1}{\mu'} \left(  \log \frac{C'}{\delta}  + \vartheta \right)\right\rceil^{\frac{D}{2}},
    \end{equation} where
    \begin{equation}
        \vartheta = \Or\left( D \log \left(  |t| + \log \frac{Md_0}{\epsilon}\right) + \log (Md_0) \right).
    \end{equation}

    Following similar arguments to those in the previous sections,
    \begin{equation}
        \|T_{-t} Q T_t \calP_1 \|_\onetoone \le 2\frakd \left( \sup_{t\in\mathbb{R}} \left\| e^{i\htilde|_\calW t} \right\|_\onetoone \right)^4 \le 2\frakd \left( 1 + C_D \left\lceil \frac{1}{\mu'} \left(  \log \frac{C'}{\delta}  + \vartheta \right)\right\rceil^{\frac{D}{2}} \right)^4 =: W_t
    \end{equation} with probability $1-\delta$, where $T_{\pm t}, Q$ are now the matrix representations of the restricted Hamiltonians. With $W_t$ replaced, the remaining finite order truncation error analysis is the same as Theorem~\ref{thm:convergence_maj_one}, and the algorithm design and variance analysis are the same as Theorem~\ref{thm:sample_complexity_l1}.

\end{proof}

\section{Conclusion and outlook}

In this work we gave a classical algorithm for simulating the real-time dynamics of weakly interacting fermions, starting from a quadratic, exactly solvable Hamiltonian perturbed by a weak local quartic interaction. Working in the Heisenberg picture allowed us to keep the free fermion dynamics exact while treating the interaction perturbatively. This led to a Dyson expansion whose convergence was controlled by the spreading of local observables under the dynamics.

For bounded-degree local Hamiltonians, we proved convergence up to times $t=\mathcal{O}(1/\lambda)$, giving a quasi-polynomial-time algorithm for evolution time that is exponentially longer than what followed from previous Majorana-propagation bounds in \cite{facelli2026fastconvergencemajoranapropagation}. For geometrically local systems on a $D$-dimensional lattice, we designed a randomized sampling algorithm that runs in polynomial time for local observables when $\lambda |t|(1+|t|)^{2D}=\mathcal{O}(1)$. The sampling variance is rigorously controlled and does not grow with the system size in this regime.
Finally, we showed that disorder could substantially extend the efficiently simulable evolution time. When the quadratic part exhibited Anderson localization, the relevant operator spreading becomes only polylogarithmic in time, allowing efficient simulation when $\lambda |t|=\mathcal{O}(1)$, modulo polylogarithmic factors.

Our results suggest several directions for future work. First, it would be interesting to combine the present sampling algorithm with ideas from the inchworm method \cite{cohen2015taming,antipov2017currents,dong2017quantum,cai2020inchworm}. In the current algorithm, sampling is done for the entire interval of time between $0$ and $t$ as a whole. An inchworm-type recursion could instead divide the entire evolution into short time segments, and reuse information obtained at shorter times to construct longer-time estimates, potentially reducing the growth of variance and extending the accessible time scale.
Second, one could combine our locality-based approach with quantum embedding methods \cite{georges1996dynamical,knizia2012density,kretchmer2018real}. In such a hybrid scheme, the interacting degrees of freedom inside the relevant light cone could be treated explicitly, while part of the surrounding region is represented by an effective non-interacting bath. This may reduce the effective light-cone volume and improve the practicality of the algorithm in higher dimensions or at longer times.

The general methodology of this work should not be limited to weakly interacting fermions. A natural direction is to study Hamiltonians of the form
$H = H_0 + \lambda V,$
where $H_0$ is an exactly solvable integrable or commuting local Hamiltonian and $\lambda V$ is a weak local perturbation. This setting includes perturbed models that arise in quantum error correction and topological phases of matter \cite{bravyi2010topological}. Efficient real-time simulation in this regime would provide a controlled tool for studying the stability of exactly solvable phases, the spreading of initially local operators, and the onset of thermalization or transport induced by weak non-integrable perturbations.

It is also of interest to understand whether computational complexity places intrinsic limits on the time scale that can be efficiently simulated at a given interaction strength $\lambda$ and target precision $\epsilon$. If a classical algorithm could efficiently simulate the dynamics for arbitrarily long times, then, for sufficiently expressive families of interacting fermionic Hamiltonians, one could encode arbitrary polynomial-size quantum circuits into Hamiltonian time evolution, and thereby obtain an efficient classical simulation of $\mathsf{BQP}$, which is considered unlikely \cite{childs2013universal}. This suggests that the efficient simulation time $t$ is fundamentally limited by a function of $\lambda$ and $\epsilon$. Such a perspective could also clarify the relationship between our positive algorithmic results and possible quantum advantage, in a way analogous to recent discussions in the Gibbs-state setting \cite{tong2025fast,smid2025polynomial,smid2025rapid,Chen2025convergencecumulant,ramkumar2025high}.

\paragraph*{Acknowledgments.} This material is based upon work supported by the U.S. Department of Energy, Office of Science, Accelerated Research in Quantum Computing Centers, Quantum Utility through Advanced Computational Quantum Algorithms, grant no. DE-SC0025572 (C.Z. and Y.T.). We also acknowledge support from NSF QLCI grant OMA-2120757 and NSF
PHY-2046195.

\bibliographystyle{ieeetr}
\bibliography{ref}

\end{document}